\documentclass[hidelinks]{siamart251216}

\usepackage{graphicx}
\usepackage{multirow}
\usepackage{amsmath,amssymb,amsfonts}
\usepackage{booktabs}
\usepackage{xcolor}
\usepackage{textcomp}
\usepackage[numbers,sort&compress]{natbib}

\newcommand{\R}{\mathbb{R}}

\newcommand{\singletonExcludedLGGz}{0.81}

\newsiamremark{remark}{Remark}
\newsiamremark{hypothesis}{Hypothesis}
\crefname{hypothesis}{Hypothesis}{Hypotheses}
\newsiamthm{claim}{Claim}
\newsiamremark{fact}{Fact}
\crefname{fact}{Fact}{Facts}

\makeatletter
\let\SM@origTheTheorem\thetheorem
\newcommand{\settheoremnum}[1]{\renewcommand{\thetheorem}{#1}}
\newcommand{\restoretheoremnum}{\let\thetheorem\SM@origTheTheorem}
\makeatother

\headers{A Null Model for Mapper Subtype Claims}{C. M. Topaz}

\title{A Null Model for Assessing\\ Mapper-Based Subtype Claims\thanks{Submitted to the editors \today.
\funding{This work received no external funding.}}}

\author{Chad M. Topaz\thanks{Williams College, Williamstown, MA, and University of Colorado Boulder, Boulder, CO
  (\email{cmt6@williams.edu}).}}

\ifpdf
\hypersetup{
  pdftitle={A Null Model for Assessing Mapper-Based Subtype Claims},
  pdfauthor={Chad M. Topaz}
}
\fi

\begin{document}

\maketitle

\begin{abstract}
The Mapper algorithm from topological data analysis constructs a graph
summarizing the shape of a high-dimensional dataset, and groups of data
points identified within this graph are widely interpreted as evidence
of distinct subtypes.  However, the covariance structure of the data
alone can make such groups appear differentiated, even when no subtypes
are present.  Existing validation approaches do not account for this effect and thus cannot distinguish covariance
artifacts from genuine subtypes.  We propose a Gaussian null model that generates reference data matching
the sample covariance matrix.  We pair it with a test statistic that measures mean-level
differentiation between communities.  In an idealized setting, we prove that covariance geometry alone causes
Mapper communities to differ in their average feature profiles, and we
show that a simpler label-permutation baseline cannot detect this effect.  Simulations confirm
well-controlled Type~I error under Gaussian data.  We apply the
framework to four published Mapper analyses spanning breast cancer gene
expression, Congressional voting, NBA player performance, and lower-grade
glioma genomics.  In every case, once outlier singleton communities are accounted for,
the observed differentiation does not exceed what the
null produces at the $\alpha = 0.05$ level.  This result does not rule out subtypes in these datasets, but it does indicate that the observed
structure is consistent with what covariance geometry alone can produce.  Stronger evidence would be needed to support a subtype claim.
\end{abstract}

\begin{keywords}
topological data analysis, Mapper algorithm, null model, cluster validation, covariance geometry, subtype validation
\end{keywords}

\begin{MSCcodes}
62R40, 62H30, 62G10
\end{MSCcodes}


\section{Introduction}\label{sec:intro}

The Mapper algorithm~\citep{Singh2007} from topological data analysis
(TDA) constructs a graph that summarizes the shape of a
high-dimensional dataset.  The algorithm projects the data onto a
low-dimensional function, divides the range of that function into
overlapping intervals, clusters the data points within each interval,
and connects clusters that share points.  The resulting graph can reveal
branching, loops, and densely connected groups of nodes called
communities, and these communities are routinely interpreted as evidence
of distinct subgroups or subtypes.  Since its introduction, Mapper has been applied to breast cancer
genomics~\citep{Nicolau2011, Lum2013}, diabetes stratification
\citep{Li2015, Wamil2023}, cancer subtyping~\citep{Rafique2020},
neuroimaging~\citep{Saggar2018, Geniesse2022, Hasegan2024}, single-cell
RNA-seq~\citep{Rizvi2017}, Congressional voting~\citep{Lum2012}, NBA player
performance~\citep{Lum2013}, and protein folding~\citep{Yao2009}, among
other domains.

When Mapper produces a graph with nontrivial structure, a natural
statistical question arises: does that structure reflect genuine
subgroups in the data, or is it consistent with what the algorithm would
produce from data with no subgroup structure but the same
\emph{covariance geometry}?  By covariance geometry we mean the pattern of variances, correlations, and
principal-component directions that characterize the data cloud's shape.
This question is important because Mapper requires the user to choose a
projection function, a partition granularity, and a clustering method,
and its output is sensitive to all of these
\citep{Dey2016, Alvarado2024, VejdemoJohansson2025}.  Yet in a survey of ten published Mapper studies
(Table~\ref{tab:null-audit-sm}), we find that null-model testing of graph structure is~rare.

To our knowledge, the only published null model for Mapper graph
features in cross-sectional data is that of~\cite{Lum2013}, which
compares Mapper output against data drawn from independent Gaussians
with common variance.  This null does not preserve correlation
structure, so it cannot distinguish community differentiation arising
from the covariance geometry of the data from differentiation arising
from additional structure such as distinct subpopulations.  The principle that null
models must preserve covariance geometry is well established in cluster
validation.  The gap statistic of~\cite{Tibshirani2001} generates
reference data aligned with the principal components of the observed
data, and SigClust~\citep{Liu2008} tests cluster significance against a
single Gaussian that matches the observed covariance structure.  This covariance-aware logic has not previously been applied to testing
Mapper graph structure.

We propose a \emph{covariance-preserving Gaussian null} for
Mapper-based subtype claims.  The idea is to generate synthetic data
that matches the observed covariance structure, apply the identical
Mapper pipeline, and ask whether the real data's community
differentiation exceeds what this benchmark produces.  The null model is
a general framework: any scalar summary of the Mapper graph can serve as
the test statistic.  We pair it here with a community-dissociation
metric that measures whether Mapper communities differ in their average
feature profiles (Section~\ref{sec:test-stat}), and we study this
combination's strengths and limitations in detail.

We prove that covariance geometry alone
drives positive population dissociation: even when data come from a
single multivariate Gaussian with no subtypes, filter-based partitions
produce communities with detectably different average feature values
(Theorem~\ref{thm:dissociation}).  This identifies the mechanism by which a null that does not preserve
covariance structure, such as the null of~\cite{Lum2013}, can yield
significance that reflects covariance geometry rather than subtype
structure.  One might ask
whether a simpler baseline---permuting community labels---would
suffice, but Theorem~\ref{thm:permutation} shows it does not: the permuted dissociation vanishes as sample size grows, so
the baseline always rejects, declaring spurious significance whether or not genuine subtypes exist.  The
structured null we propose asks
whether the observed community differentiation exceeds what a
covariance-matched Gaussian benchmark can generate.

We apply our framework to four case studies: three approximate
replications of analyses from~\cite{Lum2013} covering breast cancer gene
expression, U.S.\ Congressional voting records, and NBA player
performance, and a fourth using The Cancer Genome Atlas (TCGA) lower-grade glioma data from
\cite{RabadanCamara2020}.  In every case, the community differentiation
reported as evidence of subtypes does not exceed what the
covariance-preserving null produces at the $\alpha = 0.05$ level.  Two
analyses initially appear significant, but diagnostic examination
reveals that their rejections are driven by outlier communities rather
than systematic differentiation.  Non-rejection does not rule out the
existence of subtypes in these datasets, but it does indicate that the
observed community structure is consistent with what covariance geometry
alone can produce, and that stronger evidence would be needed to support
a subtype claim.  Conversely, rejection would mean that the community
differentiation exceeds what the data's variances and correlations can
explain under a Gaussian assumption, providing evidence for structure
beyond covariance geometry.

Section~\ref{sec:background} reviews the Mapper algorithm and the
statistical literature on cluster validation.
Section~\ref{sec:null} formalizes the covariance-preserving null, the
test statistic, and the testing procedure, and establishes the
theoretical properties of the test statistic.
Section~\ref{sec:simulation} presents a simulation study characterizing
Type~I error and sensitivity to non-Gaussian distributions.
Section~\ref{sec:results} applies the framework to four empirical case
studies, and Section~\ref{sec:discussion} discusses limitations,
extensions, and implications for practice.


\section{Background}\label{sec:background}

Mapper has been widely adopted for
subtype discovery across many domains.  We describe the algorithm here
and then explain why its outputs require null-model testing.

Mapper constructs a graph summarizing the shape of a dataset.  The key
input is a \emph{filter function} $f\colon \R^p \to \R$, which reduces
each high-dimensional data point to a single number.  The choice of
filter determines which aspect of the data the algorithm organizes
around.  A principal component projection, for instance, organizes points
by the direction of greatest variance.  A density estimate organizes
them by how crowded each point's neighborhood is, and an eccentricity
function organizes them by how central or peripheral each point
is~\citep{Singh2007, Lum2013}.  Given a point cloud
$X = \{x_1, \ldots, x_n\} \subset \R^p$ and a filter~$f$, the
algorithm proceeds in four steps:
\begin{enumerate}
\item[1.] \textbf{Cover the filter range.}  Choose a collection
  $\mathcal{U} = \{U_1, \ldots, U_N\}$ of $N$ overlapping intervals
  that together cover the range of~$f$ on the data.  The intervals
  typically have equal length and a fixed percentage of overlap, called
  the \emph{gain}.
\item[2.] \textbf{Pull back to the data.}  For each interval $U_i$,
  collect the \emph{preimage} $f^{-1}(U_i) \cap X$: the subset of data
  points whose filter values fall in~$U_i$.
\item[3.] \textbf{Cluster within each preimage.}  Apply a clustering
  algorithm independently to $f^{-1}(U_i) \cap X$.  Each resulting
  cluster becomes a \emph{vertex} (node) in the Mapper graph.
\item[4.] \textbf{Connect shared clusters.}  Draw an edge between two
  vertices whenever their corresponding clusters share at least one data
  point.  Shared points arise because adjacent intervals overlap, so a
  single data point can appear in multiple preimages.
\end{enumerate}
The output is an undirected graph whose connected components, branching
patterns, and loops encode topological features of the data as seen
through the filter~$f$.  Mapper can be viewed as a discrete,
finite-sample approximation of the Reeb graph of~$f$, a construction
from algebraic topology that collapses each level set of~$f$ to a
single point and retains only the connectivity between
them~\citep{Singh2007}.  Although we have described a scalar filter $f\colon \R^p \to \R$, the
filter can map to $\R^d$ for any $d \geq 1$; the cover then becomes a
product of one-dimensional covers and the construction is otherwise
identical.  In practice, one- and two-dimensional filters are most
common because higher-dimensional covers require exponentially more bins
and more data to populate them.  Three of our four case studies
(Section~\ref{sec:results}) use two-dimensional filters.

The construction depends on several additional user-specified choices.
The \emph{resolution}~$N$ and \emph{gain}~$g$ control the
granularity and overlap of the cover, and the \emph{clustering algorithm}
and its hyperparameters govern how each preimage is partitioned.  The
original implementation of~\cite{Singh2007} uses single-linkage
hierarchical clustering---a method that successively merges the two
closest clusters---and cuts the resulting tree (dendrogram) at the first
gap in a histogram of merge heights.  Alternative implementations use
DBSCAN, $k$-means, or agglomerative clustering with various linkages
\citep{vanVeen2019, Tauzin2021}.  Because of these many choices, the
Mapper graph can change substantially under small parameter perturbations
\citep{Dey2016, Chazal2021}.

Ref.~\cite{Alvarado2024} proves that
for any dataset and any target graph, there exist Mapper parameters that
produce it.  Ref.~\cite{VejdemoJohansson2025} shows that fixed-count
clustering introduces topological distortions.  Adaptive variants~\citep{Alvarado2025, Tao2025} reduce parameter
sensitivity, and under idealized conditions Mapper converges to the Reeb
graph~\citep{CarriereOudot2018, Brown2021, CarriereMichelOudot2018,
Dey2017}.  None of these developments, however, addresses whether the structure Mapper
finds exceeds what the algorithm would produce from a single continuous
population with the same covariance geometry.

Despite this sensitivity, nearly all published Mapper applications lack
rigorous null-model testing.
Table~\ref{tab:null-audit-sm} surveys thirteen analyses across ten
published Mapper studies and finds that only one~\citep{Lum2013}
includes any form of null-model comparison for graph structure.
Instead, researchers typically validate Mapper communities by checking
whether they correlate with external labels or domain knowledge.  In
biomedical applications, for instance, this might mean asking whether
the patients in one Mapper community share a clinical outcome or genetic
marker.  This approach can confirm that communities are meaningful, but
it does not address a more fundamental question: could the graph
structure itself arise from a single population with the same covariance
geometry?

Other areas of statistics have developed principled answers to this
kind of question.  In cluster validation, the gap
statistic~\citep{Tibshirani2001} and SigClust~\citep{Liu2008} both
compare observed cluster structure against null models that preserve the
data's distributional properties.  Within TDA, several frameworks test
whether topological features such as loops and connected components are
statistically significant rather than
artifacts of noise~\citep{Bobrowski2023, CarriereMichelOudot2018,
Chazal2021}; \cref{sm:validation} discusses these in more detail.
Separately, modularity optimization---the community-detection method we
use---has a known resolution limit that can prevent
detection of small communities~\citep{Fortunato2007}.

No analogous covariance-aware framework exists for Mapper.  The only
published null test for Mapper graph features is the
independence-based null of~\cite{Lum2013}, which compares Mapper output
to data drawn from independent Gaussians with equal variance.  As we
argue in Section~\ref{sec:null}, this null destroys all correlation
structure and is therefore too easy to beat.



\section{A structured null model for Mapper}\label{sec:null}

When Mapper produces a graph with nontrivial structure---multiple
connected components, branches, or densely connected
communities---a basic inferential question arises.  Does the observed
community differentiation exceed what the algorithm would produce from
data with no subgroup structure but the same covariance geometry?
Answering this question requires a \emph{null model}: a probability
distribution representing a single population with a specified
covariance structure.  The testing procedure is to generate many synthetic
datasets from the null, run the full Mapper pipeline on each one,
compute a summary statistic from each graph, and then ask whether the
real data's statistic is unusually large relative to the null
distribution.  The choice of null model matters.  A null that is too
easy to beat will declare every dataset significant, while a null that
is too difficult will obscure real discoveries.

The only null model used in the Mapper
literature~\citep{Lum2013} generates reference data by drawing points
independently from $\mathcal{N}(0, \sigma^2 I_p)$.  Here $I_p$ is the
$p \times p$ identity matrix, so every coordinate has the same variance
$\sigma^2$ and every pair of coordinates is uncorrelated.  The parameter
$\sigma^2$ is chosen to match the average marginal variance of the
observed data.  This null therefore preserves only the overall amount of
spread in the data, not how that spread is distributed across
directions; the resulting reference clouds are spherical.

In real data, however, the covariance structure already shapes the
geometry of the point cloud.  If variation is larger in some directions
than in others, or if variables move together, the cloud becomes
elongated, flattened, or tilted, even when the sample comes from a
single population with no subtypes.  Mapper responds to exactly this
geometry.  Consider an elongated cloud: data points near one end of the
elongation live in a different local neighborhood than points near the
other end.  When Mapper divides the filter range into overlapping
regions and clusters points within each region, these local differences
produce distinct clusters, which become separate nodes in the graph.
The result is a graph with nontrivial connectivity---not because the
data contain subgroups, but because the cloud's shape varies from one
end to the other.  A spherical null cloud, by contrast, looks roughly
the same in every direction, so its Mapper graphs, though they vary
from sample to sample due to finite-sample randomness, are simpler and
more uniform.  Comparing observed data to this null therefore
confounds covariance-driven graph complexity with any additional
structure in the data.  To ask whether community differentiation exceeds
what covariance geometry alone can produce, the null must preserve that
geometry.

\subsection{A structured null preserving covariance geometry}\label{sec:structured-null-def}

We propose to test Mapper output against a null model that preserves
the covariance structure of the data.  Given an observed data matrix
$X \in \R^{n \times p}$, where $n$ is the number of observations and
$p$ is the number of variables, define the sample covariance matrix
\begin{equation}\label{eq:sample-cov}
  \hat{\Sigma} = \frac{1}{n-1}(X - \bar{X})^\top(X - \bar{X}).
\end{equation}
The structured null generates reference data from
\begin{equation}\label{eq:structured-null}
  X^* \sim \mathcal{N}_p(0, \hat{\Sigma}).
\end{equation}
That is, each row of $X^*$ is drawn independently from a $p$-variate
normal distribution with mean zero and the same covariance matrix as
the observed data. The null hypothesis is that the data were generated
by a single multivariate normal distribution; the alternative is that
the data contain structure beyond what a single Gaussian with the
observed covariance can explain.

This null matches the observed covariance structure of the data: its
generating covariance has the same variances, correlations, and
principal-component directions as the sample covariance $\hat{\Sigma}$.
A reference dataset drawn from $\mathcal{N}_p(0, \hat{\Sigma})$ shares
the same population eigenvalue spectrum, the same blocks of correlated
variables, and the same effective dimensionality as the real
data---though any finite sample will exhibit sampling fluctuations
around these properties. The null reproduces the component of Mapper structure
generated by covariance geometry alone.  Community differentiation
that substantially exceeds the null therefore cannot be explained by a
single Gaussian with the observed covariance alone, and points toward
additional structure such as multiple modes, distinct subpopulations, or
non-Gaussian clustering.

Why choose the multivariate normal rather than some other distribution
with the same covariance?  Among all distributions with a given mean and
covariance, the multivariate normal has the highest
entropy~\citep{CoverThomas2006}, meaning it carries the least additional
structure beyond these two moments.  It is therefore the most
conservative choice: any non-Gaussian distribution with the same
covariance would introduce extra structure that could make the null
harder to beat.  This is the same logic underlying
SigClust~\citep{Liu2008}, applied here to Mapper graph structures.  The
null is straightforward to simulate via Cholesky or eigenvalue
decomposition of $\hat{\Sigma}$.

We can now state the testing procedure formally.  Let $T(\cdot)$
denote a scalar test statistic computed from a Mapper graph; we define
our choice of $T$ in Section~\ref{sec:test-stat}.  The conceptual null
hypothesis is:
\begin{align}
  H_0 &: X_1, \ldots, X_n \text{ are drawn from a $p$-variate} \notag \\
  &\quad \text{Gaussian with covariance matrix } \Sigma. \label{eq:H0}
\end{align}
After preprocessing, we center the data, so we work with the equivalent
mean-zero form $\mathcal{N}_p(0, \Sigma)$.  Because $\Sigma$ is
unknown, we implement the null as a \emph{plug-in benchmark test}: we
replace $\Sigma$ by the sample covariance matrix $\hat{\Sigma}$ computed
from the preprocessed data, generate null replicates from
$\mathcal{N}_p(0, \hat{\Sigma})$, and reapply the full analysis pipeline
to each replicate.  This means recomputing the distances, filter
function, Mapper graph, community detection, and test statistic each
time.

The inferential question is whether the observed test statistic is
unusually large relative to this covariance-matched Gaussian benchmark.
We define the $z$-score and Monte Carlo $p$-value formally in
Section~\ref{sec:mc-procedure} after introducing the test statistic.
The test is conditional: it conditions on the preprocessing pipeline,
the analyst's chosen Mapper parameters, and the estimated covariance
$\hat{\Sigma}$.  It asks whether the observed community differentiation
exceeds what the estimated Gaussian null can produce under those same
choices.

Two important caveats apply.  First, the null is \emph{potentially
miscalibrated under non-Gaussian data}: if the true data-generating
distribution is not Gaussian but is unimodal with the same covariance,
the Gaussian null could produce more or less Mapper structure than the
true unimodal distribution.  The result may be over- or under-rejection
even when the data come from a single population.  Second, the null is
\emph{specific to covariance structure}: it preserves the covariance
matrix but not higher-order features of the distribution (skewness,
kurtosis, tail behavior).  Departures from normality that do not involve multiple
subpopulations---such as heavy tails or skewness---could lead to false
rejections.  A rejection therefore indicates that community
differentiation exceeds what the Gaussian benchmark produces, not
necessarily that subtypes are present.  We return to these limitations
in Section~\ref{sec:discussion}.

When $p$ is large relative to $n$, the sample covariance matrix
$\hat{\Sigma}$ may be singular or ill-conditioned.  In particular,
after centering, $\hat{\Sigma}$ has rank at most $n-1$, so it is
singular whenever $p \ge n$.  A singular or nearly singular covariance
matrix has eigenvalues at or near zero and is not suitable for standard
Cholesky-based simulation.  We address the singularity via ridge regularization or
eigendecomposition in Section~\ref{sec:mc-procedure}.

\subsection{Test statistic: the dissociation metric}\label{sec:test-stat}

Mapper produces a graph, but in subtype-discovery applications the
scientific claim of interest is typically that distinct regions of that
graph correspond to distinct subpopulations in the data.  To study
this, we partition the Mapper graph into
\emph{communities}---groups of densely connected vertices---using the
algorithm described in Section~\ref{sec:community-detection}.  The
null-model test then asks whether community differentiation exceeds
what a covariance-matched Gaussian benchmark can produce.  This
requires a scalar test statistic that is large when communities are
well separated in feature space and small when they are not.  Our
statistic measures whether any two communities differ in their average
feature values, rather than whether the Mapper graph is topologically
complex.

Suppose the Mapper graph has been partitioned into $K$ communities
$C_1, \ldots, C_K$.  Because the cover intervals overlap, a single
data point may belong to more than one Mapper vertex, and those
vertices may fall in different communities.  We assign each data point
to exactly one community by a plurality vote over its vertices
(details in Section~\ref{sec:community-detection}).  We divide the $p$ features into two fixed, disjoint blocks $A$ and $B$
of roughly equal size.  We use two predetermined feature blocks rather than a single average
over all $p$ features so that the statistic does not rest entirely on
one global aggregation, while keeping each block large enough for
reliable averages.  Finer splits would provide more looks but with
noisier per-block averages; two blocks are a simple compromise.
Sections~\ref{sec:res-nki} and~\ref{sec:res-nba} confirm that results
are stable across 50 independent random splits.

For each community $C_k$ and each feature block, we average the
feature values across the community's members and then across the
features in the block:
\begin{equation}\label{eq:block-mean}
  \mu_{k,A} = \frac{1}{|A|}
  \sum_{j \in A} \biggl(\frac{1}{|C_k|} \sum_{i \in C_k} x_{ij}\biggr),
  \qquad
  \mu_{k,B} = \frac{1}{|B|}
  \sum_{j \in B} \biggl(\frac{1}{|C_k|} \sum_{i \in C_k} x_{ij}\biggr).
\end{equation}
Thus $\mu_{k,A}$ is a single number summarizing how high or low
community~$k$'s members score, on average, across the features in
block~$A$, and likewise for $\mu_{k,B}$.  We define the
\emph{dissociation metric}
\begin{equation}\label{eq:dissociation}
  D = \max_{1 \le k < \ell \le K} \max\bigl(
    |\mu_{k,A} - \mu_{\ell,A}|,\;
    |\mu_{k,B} - \mu_{\ell,B}|
  \bigr).
\end{equation}
When $D$ is large, at least one pair of communities differs
substantially in average feature level on at least one block.  When
$D$ is small, no pair of communities is strongly separated in mean
feature level, though the statistic may miss other forms of
differentiation (see below).  We set $D = 0$ if the graph has fewer
than two communities.

Taking the maximum over community pairs makes the statistic sensitive
to the presence of even one well-separated pair of communities.
Because the maximum can also be driven by a small or noisy community,
we examine singleton sensitivity as a diagnostic in
Section~\ref{sec:res-nba}.  Because $\mu_{k,A}$ averages over $|A|$
features, the metric is most sensitive to broad, coherent mean shifts.
If community~$k$ is elevated on half of block~$A$'s features and
depressed on the other half, $\mu_{k,A}$ may be near zero despite
substantial feature-level differences.  The statistic is therefore less
sensitive to sparse or sign-mixed subtype signals; we discuss
alternative statistics targeting feature-level maxima in
Section~\ref{sec:discussion}.  Alternative test statistics---such as the graph's modularity score,
the number of communities, or the first Betti number---could be
substituted within the same Monte Carlo framework to target different
aspects of Mapper output.

\subsection{Theoretical properties of the dissociation metric}\label{sec:theory}

We now analyze an idealized version of the dissociation metric to
isolate the covariance-driven mechanism that motivates the null.
Sections~\ref{sec:community-detection}--\ref{sec:mc-procedure} describe
the full implemented procedure.  The
following results establish that, in this idealized setting, the
dissociation metric is expected to be positive under a non-spherical
Gaussian even when the data contain no subtypes, and that a simpler
label-permutation baseline cannot detect this effect.  All
proofs are given in \cref{supp:proofs}.

We work with \emph{interval partitions}: partitions of the data
induced by dividing the filter range into non-overlapping intervals.
We discuss how these relate to actual Mapper communities below.

Let $X \sim \mathcal{N}_p(0, \Sigma)$ with $\Sigma$ positive definite.  Write the eigenvalues of $\Sigma$ in
decreasing order as $\lambda_1 \ge \lambda_2 \ge \cdots \ge \lambda_p > 0$,
and assume $\lambda_1 > \lambda_2$ so that the unit top eigenvector $u_1$ is
uniquely defined up to sign.  Let the filter be the projection onto the first
principal component, $f(x) = u_1^\top x$, which is the most common
filter choice in practice.  Let $A, B$ be a partition of
$\{1, \ldots, p\}$ into two nonempty blocks of variables.  Define the average loading of each block on~$u_1$:
\begin{equation}\label{eq:avg-loading}
  \bar{u}_{1,A} = \frac{1}{|A|} \sum_{j \in A} (u_1)_j, \qquad
  \bar{u}_{1,B} = \frac{1}{|B|} \sum_{j \in B} (u_1)_j.
\end{equation}

\begin{theorem}[Covariance-driven dissociation]\label{thm:dissociation}
  Let $\{I_1,\ldots,I_K\}$ be a finite interval partition of~$\R$
  with $K \ge 2$ and each interval having positive probability
  under~$f$.  Define the population block mean
  $m_{k,A} = \frac{1}{|A|}\sum_{j \in A} E[X_j \mid f \in I_k]$ and
  similarly $m_{k,B}$, and set
  \begin{equation}\label{eq:dpop}
    D_{\mathrm{pop}}
    \;=\;
    \max_{k < \ell}\;
    \max\bigl\{\,|m_{k,A} - m_{\ell,A}|,\;
                  |m_{k,B} - m_{\ell,B}|\,\bigr\}.
  \end{equation}
  If $\max\bigl(|\bar{u}_{1,A}|,\allowbreak |\bar{u}_{1,B}|\bigr) > 0$, then
  \begin{equation}\label{eq:thm-bound}
    D_{\mathrm{pop}} \;\ge\; \max\!\bigl(|\bar{u}_{1,A}|,\;
    |\bar{u}_{1,B}|\bigr) \cdot
    \bigl|E[f \mid f \in I_{\max}] - E[f \mid f \in I_{\min}]\bigr|
    \;>\; 0,
  \end{equation}
  where $I_{\min}$ and $I_{\max}$ are the leftmost and rightmost
  positive-probability intervals.
\end{theorem}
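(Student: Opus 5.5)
The plan is to compute the conditional means $E[X \mid f \in I_k]$ exactly using Gaussian regression, reduce the block means to a scalar multiple of a one-dimensional conditional expectation, and then bound the maximum over pairs below by the extreme pair.

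\textbf{Step 1: conditional mean given the filter.} Because $X \sim \mathcal{N}_p(0,\Sigma)$ and $f = u_1^\top X$ are jointly Gaussian, we have
\[
E[X \mid f] = \frac{\operatorname{Cov}(X,f)}{\operatorname{Var}(f)}\, f = \frac{\Sigma u_1}{u_1^\top \Sigma u_1}\, f = \frac{\lambda_1 u_1}{\lambda_1}\, f = u_1 f .
\]
The conditioning event $\{f \in I_k\}$ is $\sigma(f)$-measurable and has positive probability. The tower property therefore gives $E[X \mid f \in I_k] = u_1\, E[f \mid f \in I_k]$. Writing $e_k = E[f \mid f\in I_k]$, averaging over coordinates yields $m_{k,A} = \bar u_{1,A}\, e_k$ and $m_{k,B} = \bar u_{1,B}\, e_k$. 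Consequently
\[
|m_{k,A}-m_{\ell,A}| = |\bar u_{1,A}|\,|e_k-e_\ell|,
\]
and the same holds for block $B$.

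\textbf{Step 2: lower bound by the extreme pair.} The intervals with positive probability are pairwise disjoint, so the order on $\R$ orders them. Taking the pair $(k,\ell)$ to be $I_{\min}$ and $I_{\max}$, which are distinct because at least two intervals carry positive probability when $K \ge 2$, we get
\[
D_{\mathrm{pop}} \ge \max\bigl(|\bar u_{1,A}|, |\bar u_{1,B}|\bigr)\,|e_{\max}-e_{\min}|.
\]
This is the first inequality in \eqref{eq:thm-bound}.

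\textbf{Step 3: strict positivity.} This is the only step needing care, namely showing $e_{\max} > e_{\min}$. We have $f \sim \mathcal{N}(0,\lambda_1)$ with $\lambda_1>0$, so its density is positive on all of $\R$. Conditional on $f \in I$ for a positive-probability interval $I$, the law of $f$ is supported in the closure of $I$ and is nondegenerate, so $e_I$ lies in the interior of $I$. Suppose $I_{\min}$ lies to the left of $I_{\max}$, and let $c$ be the right endpoint of $I_{\min}$. Then $I_{\max} \subset [c,\infty)$. The conditional distribution of $f$ on $I_{\min}$ has a density on $I_{\min}$, so $e_{\min} < c$. Likewise $e_{\max} > c$, because $I_{\max}$ has positive probability and the point $c$ carries zero mass. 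Hence $e_{\max}-e_{\min} > 0$. Combined with the hypothesis $\max(|\bar u_{1,A}|,|\bar u_{1,B}|)>0$, this gives strict positivity of the bound.

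The main subtleties are minor. One is the edge case of unbounded intervals, where the conditional means are still finite because Gaussian tails are integrable. The other is that "positive probability" intervals are nondegenerate intervals, which holds because $f$ has full-support density. The key identity $\Sigma u_1 = \lambda_1 u_1$ is what makes the block means collapse onto the filter.
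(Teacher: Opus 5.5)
Your proposal is correct and follows the same route as the paper. Both arguments use Gaussian regression $E[X \mid f] = u_1 f$ via $\Sigma u_1 = \lambda_1 u_1$, pass to interval conditioning by the tower property, factor the block means as $\bar u_{1,A}\,E[f \mid f\in I_k]$, and restrict the maximum to the extreme pair $(I_{\min}, I_{\max})$. You depart from the paper only in the strict-positivity step: separating the intervals at the right endpoint $c$ of $I_{\min}$ to get $E[f\mid f\in I_{\min}] < c < E[f\mid f\in I_{\max}]$ is a more rigorous justification than the paper's informal appeal to the everywhere-positive Gaussian density.
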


The key identity behind Theorem~\ref{thm:dissociation} is the
Gaussian regression formula
\begin{equation}\label{eq:gaussian-regression}
  E[X_j \mid f = c] = (u_1)_j \cdot c.
\end{equation}
Points with different filter values have different conditional means,
and averaging over a feature block produces a nonzero between-community
difference whenever that block has a nonzero average loading on~$u_1$.  The bound is driven by
two factors: the loading of the feature blocks on $u_1$, and the
separation between conditional means of the filter in the extreme
intervals.  Since $f \sim N(0, \lambda_1)$, larger $\lambda_1$ increases
filter variance and widens this separation.  The loading condition
$\max(|\bar{u}_{1,A}|, |\bar{u}_{1,B}|) > 0$ holds for all but a
measure-zero set of positive definite matrices~$\Sigma$
(\cref{supp:genericity}), so
Theorem~\ref{thm:dissociation} applies generically.

Theorem~\ref{thm:dissociation} identifies a covariance-driven
dissociation mechanism tied to anisotropic variation along a preferred
filter direction: a non-spherical Gaussian with no subtypes will
produce positive population dissociation under a PC1 filter partition.  A
spherical null does not encode that anisotropy, which helps explain why
comparison against a spherical null can yield spurious significance.
A seemingly plausible alternative would be to test community labels by
permutation: randomly reassign data points to communities of the same
sizes and check whether the observed dissociation is unusually large.
The next result shows that this baseline is uninformative.

\begin{theorem}[Label-permutation null]\label{thm:permutation}
  Suppose the number of communities~$K$ is fixed and community sizes
  satisfy $n_k / n \to \pi_k \in (0,1)$ as $n \to \infty$.  Conditional on any observed
  data for which the finite-population variances $S_A^2$ and $S_B^2$
  remain bounded, under a uniformly random reassignment of data points
  to communities, preserving the sizes~$n_k$, the permuted dissociation
  converges to zero in probability at rate $n^{-1/2}$.
  That is, as the sample size grows, the probability that
  $D^{\mathrm{perm}}$ exceeds any fixed threshold vanishes.
\end{theorem}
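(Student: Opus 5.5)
The plan is to reduce the permuted dissociation to finitely many sample-mean deviations under sampling without replacement, and then bound each one by Chebyshev's inequality. Conditional on the observed data, define for each observation $i$ the block averages
\[
  a_i = \frac{1}{|A|}\sum_{j\in A} x_{ij}, \qquad b_i = \frac{1}{|B|}\sum_{j\in B} x_{ij}.
\]
Set $\bar a = n^{-1}\sum_i a_i$. Take $S_A^2 = (n-1)^{-1}\sum_i (a_i-\bar a)^2$ as the finite-population variance, and define $\bar b$ and $S_B^2$ analogously. Under a size-preserving random reassignment, community $C_k$ is a uniformly random subset of size $n_k$. The permuted block mean $\mu^{\mathrm{perm}}_{k,A}$ is then exactly the sample mean of a simple random sample without replacement from the finite population $\{a_1,\ldots,a_n\}$.

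The first step is to invoke the classical moments of simple random sampling without replacement:
\[
  E\bigl[\mu^{\mathrm{perm}}_{k,A}\bigr] = \bar a, \qquad
  \operatorname{Var}\bigl(\mu^{\mathrm{perm}}_{k,A}\bigr) = \frac{S_A^2}{n_k}\Bigl(1-\frac{n_k}{n}\Bigr).
\]
Since $n_k/n \to \pi_k \in (0,1)$, we have $n_k \ge c n$ for some $c>0$ and all large $n$. The variance is therefore at most $S_A^2/(cn) = O(n^{-1})$, uniformly over data sequences with $S_A^2$ bounded. The same holds for block $B$.

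The second step is to apply the triangle inequality, $|\mu_{k,A}-\mu_{\ell,A}| \le |\mu_{k,A}-\bar a| + |\mu_{\ell,A}-\bar a|$, which gives
\[
  D^{\mathrm{perm}} \le 2\max_{k}\max\bigl(|\mu^{\mathrm{perm}}_{k,A}-\bar a|,\, |\mu^{\mathrm{perm}}_{k,B}-\bar b|\bigr).
\]
A union bound over the $2K$ terms, with $K$ fixed, combined with Chebyshev's inequality, yields for any $t>0$
\[
  P\bigl(D^{\mathrm{perm}} > t\bigr) \le \sum_{k=1}^K \frac{4}{t^2}\Bigl(\frac{S_A^2}{n_k}+\frac{S_B^2}{n_k}\Bigr) \le \frac{4K\,(S_A^2+S_B^2)}{c\,n\,t^2}.
\]
Two conclusions follow. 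Taking $t$ fixed, the right side tends to zero, which is convergence in probability. Taking $t = M n^{-1/2}$, the right side is at most $4K(S_A^2+S_B^2)/(cM^2)$, which can be made arbitrarily small by choosing $M$ large, uniformly in $n$. Hence $D^{\mathrm{perm}} = O_P(n^{-1/2})$.

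The only delicate step is the bookkeeping rather than any real obstacle. We must check that conditioning on the data makes the reassignment the sole source of randomness, and that the bounded-variance hypothesis is stated for the block-averaged values $a_i$ and $b_i$, since these are what enter the finite-population variance. If the paper defines $S_A^2$ instead as a per-feature variance, we would bound the variance of $a_i$ by the average per-feature variance using Cauchy–Schwarz, and the argument is otherwise unchanged.
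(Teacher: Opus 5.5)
Your proof is correct and follows the paper's route: you reduce to block averages, apply the finite-population variance formula for sampling without replacement, use Chebyshev's inequality, and pass through the grand mean with the triangle inequality before taking a maximum over the fixed number of pairs. Your explicit union bound $P(D^{\mathrm{perm}}>t)\le 4K(S_A^2+S_B^2)/(c\,n\,t^2)$ is a quantitative version of the paper's $O_p(n^{-1/2})$ bookkeeping, and since the paper defines $S_A^2$ as the variance of the block averages, your Cauchy--Schwarz fallback is not needed.
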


Together, Theorems~\ref{thm:dissociation} and~\ref{thm:permutation}
imply that the label-permutation baseline will pass for large~$n$:
the observed dissociation is bounded away from zero while the permuted
dissociation converges to zero, so in this idealized asymptotic regime
the permutation null passes whether or not genuine subtypes are present.  The permutation null confirms only that communities are spatially
coherent along the filter axis, a tautological consequence of how
Mapper constructs them, and does not test whether the observed
community structure exceeds what covariance geometry alone can produce.

Theorems~\ref{thm:dissociation} and~\ref{thm:permutation} are stated
for interval partitions defined by the filter.  Mapper communities are
not exactly interval partitions---they depend on clustering within
preimages and on graph community detection---but they approximately
respect filter order.  Under a Gaussian, the data decompose as
\begin{equation}\label{eq:decomposition}
  X = u_1 f + Z, \qquad E[X \mid C] = u_1 \cdot E[f \mid C] + E[Z \mid C],
\end{equation}
where $Z = X - u_1 f$ is the residual after removing the PC1
component, independent of~$f$ because uncorrelated Gaussian
components are independent, and $C$ denotes a data point's community
membership.  If community membership is primarily determined by the
filter value, the average residual within each community is small
and the interval-partition results carry over approximately.  We do
not claim a formal theorem for the full Mapper pipeline; rather, the
interval-partition analysis isolates the mechanism that our empirical
studies then assess in practice.

\subsection{Community detection}\label{sec:community-detection}

The dissociation metric requires a partition of the
Mapper graph into communities.
We use the Louvain algorithm~\citep{Blondel2008}, which
partitions a graph into densely connected groups by maximizing the
\emph{modularity score}~$Q$.  Modularity compares the density of
edges within each group to what would be expected in a random graph with
the same degree sequence.  Louvain is fast, widely used, and does not require the number of
communities to be specified in advance.  Its output can depend on
internal tie-breaking; for each null replicate, we fix a random seed
before the Louvain call so that stochasticity in community detection
does not confound the null comparison (see \cref{sm:replication-details}).

Each data point is assigned to a community as follows. A data point
may belong to multiple Mapper vertices, since cover intervals overlap,
and these vertices may belong to different communities in the
Louvain partition. We assign each data point by plurality vote to the community that
appears most frequently among its vertices; ties
are broken by the lowest community index.  This hard assignment
discards the soft membership information that the overlapping cover
provides.  Existing Mapper-based subtype analyses typically assign
points to subtypes informally, by visual inspection of the graph.
A formal assignment rule is necessary here because the test statistic
requires explicit community means~\eqref{eq:block-mean}.  The same
rule is applied to every null replicate, so the comparison remains
valid.  Points that do not
appear in any Mapper vertex are left unassigned and excluded from the
dissociation computation.

Modularity optimization has a known resolution
limit~\citep{Fortunato2007}: communities smaller than a scale that
depends on the total number of edges in the graph may go undetected.
Because this limit applies equally to the observed analysis and to
every null replicate, it does not introduce an obvious directional bias
into the hypothesis test, though we cannot rule out more subtle
interactions between graph size and resolution.  The test can detect
community structure only at the resolution scale of the Louvain
algorithm, and may miss finer-grained subgroups.

\subsection{Monte Carlo testing procedure}\label{sec:mc-procedure}

The complete testing procedure is as follows.

\begin{enumerate}
\item[1.] \textbf{Compute the observed test statistic.} Given the data
  matrix $X$, compute the distance matrix, the filter function(s), and
  the Mapper graph using the analyst's chosen parameters. Apply the
  Louvain algorithm to obtain communities. Assign each data point to a
  community and compute the observed dissociation metric $D_{\mathrm{obs}}$.
\item[2.] \textbf{Estimate the sample covariance.} Compute the sample
  covariance matrix $\hat{\Sigma}$ from $X$ (after any preprocessing
  such as centering and scaling). If $\hat{\Sigma}$ is singular or
  ill-conditioned, there are two options.  \emph{Ridge regularization}
  replaces $\hat{\Sigma}$ by $\hat{\Sigma} + \epsilon I_p$ for a small
  $\epsilon > 0$, shifting all eigenvalues upward and making the matrix
  positive definite; the choice of $\epsilon$ is described in
  \cref{supp:ridge}.  Alternatively, the
  \emph{eigendecomposition approach} decomposes $\hat{\Sigma}$ into its
  eigenvectors and eigenvalues, discards the zero-eigenvalue directions,
  and generates null samples only in the subspace where the data actually
  vary.  We use ridge regularization for the lower-grade glioma (LGG) analysis
  (Section~\ref{sec:res-lgg}), where $p = 4{,}500 \gg n = 534$, and the
  eigendecomposition approach for the NKI
  (Section~\ref{sec:res-nki}) and voting
  (Section~\ref{sec:res-voting}) analyses.
\item[3.] \textbf{Generate null replicates.} For $b = 1, \ldots, B$:
  \begin{enumerate}
  \item[(a)] Draw $X^{*(b)} \sim \mathcal{N}_p(0, \hat{\Sigma})$ with
    the same sample size $n$ as the observed data.
  \item[(b)] Compute the distance matrix for $X^{*(b)}$ using the same
    metric as the observed analysis.
  \item[(c)] Compute the filter function(s) for $X^{*(b)}$.  Filters
    that are derived from the data, such as $L^\infty$ centrality,
    principal components, or density, are recomputed from scratch on
    $X^{*(b)}$.  Filters based on external variables not included in
    the covariance model require different treatment.  For example, in
    the NKI breast cancer analysis (Section~\ref{sec:res-nki}), one
    filter dimension is a clinical survival indicator that is not part
    of the gene expression matrix.  Because the null cannot regenerate
    this variable, we reuse the observed survival values, keeping each
    value attached to its original data-point index, and add small
    independent jitter so that duplicated filter values do not
    artificially force points into the same Mapper interval.  This holds the external variable
    fixed and tests only whether the feature data contain structure
    beyond covariance geometry.
  \item[(d)] Run Mapper on $X^{*(b)}$ with identical parameters
    (resolution, gain, clustering algorithm and settings) to obtain a
    null Mapper graph.
  \item[(e)] Apply the Louvain algorithm to the null Mapper graph,
    assign data points to communities, and compute the null dissociation
    $D^{*(b)}$.
  \end{enumerate}
\item[4.] \textbf{Compute the $z$-score.} The standardized test
  statistic is
  \begin{equation}\label{eq:zscore}
    z = \frac{D_{\mathrm{obs}} - \bar{D}^*}{s_{D^*}},
  \end{equation}
  where $\bar{D}^* = \frac{1}{B} \sum_{b=1}^B D^{*(b)}$ and
  $s_{D^*}$ is the standard deviation of the null dissociation values.
  Large positive values of $z$ indicate that the observed community
  structure is more differentiated than what arises under the structured
  null.
\end{enumerate}

For the empirical analyses we use $B = 200$ null replicates, except
for LGG where we use $B = 100$ due to computational cost; the
simulations use $B = 50$.  With $B = 200$, the standard error of the
estimated null mean is $s_{D^*}/\sqrt{200}$, providing adequate
precision for distinguishing significant from non-significant results.  The Monte
Carlo $p$-value $\hat{p}$ is valid regardless of the null distribution
shape.  When a null replicate produces a Mapper graph with fewer than
two communities, we set $D^{*(b)} = 0$.  In all of our empirical
and simulation analyses, every null replicate produces at least two
communities, so this convention is never invoked.

As a secondary diagnostic, we also compute the label-permutation null
analyzed in Theorem~\ref{thm:permutation}: we keep the data and
community sizes fixed, randomly reassign data points to communities,
and recompute~$D$.  As established above, this test confirms spatial
coherence but cannot distinguish covariance-driven coherence from
genuine subtypes.  We report it alongside the structured null to make
this contrast visible.


\section{Simulation study}\label{sec:simulation}

Before applying the structured null to real data, we characterize its
operating properties with known data-generating processes. The goals are
threefold.  First, we verify that the test controls Type~I error when data are
drawn from a single multivariate normal
(Section~\ref{sec:sim-typeI}).  Second, we quantify how much false
rejection rates inflate under non-Gaussian unimodal structure, heavy
tails, and skewness, where the Gaussian null is misspecified
(Section~\ref{sec:sim-typeI}).  Third, we examine the test's behavior against Gaussian mixtures with known
separation, which reveals what the test
actually measures (Section~\ref{sec:sim-power}).

Each scenario generates $R = 200$ independent datasets of $n = 300$
data points and applies the full structured null test with $B = 50$
null replicates per dataset.  The Mapper pipeline uses two-dimensional
equalized principal coordinates analysis (PCoA) filters, resolution~15,
gain~2.0, and 10 histogram bins.  We reject when $z > 1.645$,
the one-sided $\alpha = 0.05$ threshold, because only unusually
large dissociation counts as evidence against the null.

The data-generating processes span three regimes.  For Type~I error
calibration, we draw data from spherical and correlated Gaussians in
$p = 10$, $50$, and $500$ dimensions.  For misspecification, we use a
multivariate $t$ distribution with $\mathrm{df} = 5$ and a skewed
unimodal distribution.  For power, we use all-feature and sparse
mean-shift mixtures with separation $\delta \in \{0.5, 1.0, 2.0\}$
and heterogeneous-covariance mixtures with correlation $\rho \in
\{0.5, 0.8, 0.9\}$.  Full specifications are in \cref{sm:dgp}.

\subsection{Calibration under Gaussian nulls and sensitivity to misspecification}\label{sec:sim-typeI}

Table~\ref{tab:sim-typeI} reports the rejection rate at $\alpha = 0.05$
for each null scenario.  Under all Gaussian data-generating
processes---spherical ($p = 10$), correlated ($p = 10$, $50$, and
$500$)---the rejection rates range from 4.0\% to 6.0\%, all within one
Monte Carlo standard error of the nominal 5\% level
($\sqrt{0.05 \cdot 0.95 / 200} \approx 1.5\%$).  The $p = 500$
scenario uses ridge regularization and directly validates the test in
the $p > n$ regime relevant to the genomic case studies.  Doubling the
number of null replicates from $B = 50$ to $B = 100$ does not change
the rejection rate, so $B = 50$ provides adequate calibration.

Because $D$ is a maximum over non-negative quantities, its null
distribution is typically right-skewed, and the $z$-score threshold
$z > 1.645$ may be somewhat conservative.  The Monte Carlo $p$-value
$\hat{p}$, which makes no distributional assumption, is more reliable;
we report both throughout.

Under non-Gaussian unimodal distributions, the Gaussian null is
misspecified and overrejects.  The multivariate~$t$ with
$\mathrm{df} = 5$ produces a rejection rate of 79.5\%, because heavy
tails generate extreme observations that Mapper resolves into
communities with differentiated feature profiles, yielding dissociation
values the Gaussian null cannot match.  Skewness has a milder but
still meaningful effect, with a rejection rate of 23.0\%.  Asymmetric
marginals create directional density gradients that Mapper resolves
into communities with systematically shifted means.  Practitioners applying the test to
data with non-Gaussian marginals should interpret rejections
cautiously, as the test may detect departures from Gaussianity rather
than genuine subtype structure.  We return to this point in
Section~\ref{sec:discussion}.

\begin{table}[t]
\caption{Type~I error calibration and misspecification sensitivity.
The test is well-calibrated under Gaussian data but can severely
overreject under heavy tails or skewness.
We generate $R$ independent datasets of $n = 300$ points in $p$
dimensions from each distribution, apply the structured null test with
$B$ null replicates per dataset, and reject when $z > 1.645$
(one-sided $\alpha = 0.05$).  The first five rows test calibration
under correctly specified Gaussian nulls; all rejection rates fall
within one Monte Carlo standard error of the nominal 5\% level
($\sqrt{0.05 \cdot 0.95 / R} \approx 1.5\%$).  The last two rows
show overrejection when the Gaussian null is misspecified for
non-Gaussian unimodal data.}
\label{tab:sim-typeI}
\begin{tabular}{@{}lccccc@{}}
\toprule
\textbf{Distribution} & $p$ & $B$ & $R$ & \textbf{Mean} $z$ & \textbf{Rejection rate} \\
\midrule
Spherical Gaussian      & 10  & 50  & 200 & $-0.13$ & $0.045$ \\
Correlated Gaussian     & 10  & 50  & 200 & $\phantom{-}0.00$  & $0.060$ \\
Correlated Gaussian     & 50  & 50  & 200 & $-0.00$ & $0.040$ \\
Correlated Gaussian     & 500 & 50  & 200 & $\phantom{-}0.00$ & $0.050$ \\
Correlated Gaussian     & 10  & 100 & 200 & $\phantom{-}0.00$ & $0.050$ \\
\midrule
Multivariate $t$ ($\mathrm{df}=5$) & 10  & 50  & 200 & $\phantom{-}4.08$  & $0.795$ \\
Skewed unimodal         & 10  & 50  & 200 & $\phantom{-}0.59$  & $0.230$ \\
\bottomrule
\end{tabular}
\end{table}

\subsection{Behavior under Gaussian mixture alternatives}\label{sec:sim-power}

Table~\ref{tab:sim-power} reports results for all mixture
scenarios. For the all-feature mean-shift mixtures, the test does not reject
the structured null for \emph{any} separation, including the
largest ($\delta = 2.0$), where the mean $z$-scores are in fact
strongly \emph{negative} ($-2.52$ at $p = 10$, $-2.73$ at $p = 50$).
Rather than indicating a failure of the test, these results illuminate
what the structured null test actually measures.

\begin{table}[t]
\caption{Behavior of the structured null test under Gaussian mixture
alternatives.  The test does not reliably detect any of these mixtures,
because the mixture structure is already captured by the sample
covariance matrix, so the Gaussian null generates data that look
similar.  For each mixture type, we generate $R = 200$ datasets of
$n = 300$ points in $p$ dimensions and test with $B = 50$ null
replicates, rejecting when $z > 1.645$ (one-sided $\alpha = 0.05$).
Rows~1--4: all-feature mean-shift mixtures where every feature shifts by
$\pm\delta/2$.  Rows~5--6: sparse mixtures where only $k = 5$ of
$50$ features shift.  Rows~7--12: mixtures with no mean shift but
different covariance structures in each component, where $\rho$ is the
correlation among features within each component.}
\label{tab:sim-power}
\begin{tabular}{@{}lcccccc@{}}
\toprule
\textbf{Mixture type} & $p$ & $\delta$ & $\rho$ & $R$ & \textbf{Mean} $z$ & \textbf{Rejection rate} \\
\midrule
All-feature mixture        & 10 & 0.5 & 0.5 & 200 & $-0.15$ & $0.030$ \\
All-feature mixture        & 10 & 1.0 & 0.5 & 200 & $-0.85$ & $0.005$ \\
All-feature mixture        & 10 & 2.0 & 0.5 & 200 & $-2.52$ & $0.000$ \\
All-feature mixture        & 50 & 2.0 & 0.5 & 200 & $-2.73$ & $0.000$ \\
\midrule
Sparse ($k=5$)    & 50 & 1.0 & 0.5 & 200 & $-0.02$ & $0.035$ \\
Sparse ($k=5$)    & 50 & 2.0 & 0.5 & 200 & $-0.06$ & $0.065$ \\
\midrule
Hetero-covariance      & 10 & ---  & 0.5 & 200 & $\phantom{-}0.22$ & $0.110$ \\
Hetero-covariance      & 50 & ---  & 0.5 & 200 & $-0.20$ & $0.050$ \\
Hetero-covariance      & 10 & ---  & 0.8 & 200 & $\phantom{-}0.11$ & $0.070$ \\
Hetero-covariance      & 50 & ---  & 0.8 & 200 & $-0.39$ & $0.030$ \\
Hetero-covariance      & 10 & ---  & 0.9 & 200 & $\phantom{-}0.03$ & $0.065$ \\
Hetero-covariance      & 50 & ---  & 0.9 & 200 & $-0.40$ & $0.025$ \\
\bottomrule
\end{tabular}
\end{table}

The explanation lies in how the sample covariance relates to the
mixture structure.  The all-feature mixture shifts every feature by the same
amount, so the two component means $\pm(\delta/2)\mathbf{1}$ lie along
the all-ones direction.  After centering and scaling, the sample
covariance $\hat\Sigma$ absorbs this between-component variance: it
``sees'' the mixture as additional spread along one axis.  Draws from
$\mathcal{N}_p(0, \hat\Sigma)$ therefore produce an elongated data
cloud with geometry similar to the mixture, and Mapper applied to
these null draws generates community structure with comparable---or
even larger---dissociation values.  In short, the covariance null
absorbs the mixture structure because that structure is encoded in the
second-order statistics.  At larger separations the between-component
variance dominates $\hat\Sigma$ even more, hence the increasingly
negative $z$-scores as $\delta$ grows.

The sparse mixtures (rows 5--6 of Table~\ref{tab:sim-power}),
where only 5 of 50 features shift, tell the same story: rejection
rates of 3.5\% ($\delta = 1.0$) and 6.5\% ($\delta = 2.0$) are
indistinguishable from the nominal level.  Even when the mean shift is
concentrated on a small subset of features, the corresponding entries of
$\hat\Sigma$ capture the between-component variance, and the covariance
null absorbs the mixture signal.  This extends the scope statement
beyond uniform shifts: the structured null does not detect \emph{any}
mean-shift mixture whose signal is encoded in the sample covariance,
regardless of how many features carry the shift.

The heterogeneous-covariance mixtures (rows 7--12) probe whether the
test can detect alternatives that differ from the null only in
covariance structure.  The two components share a common mean (zero) but
have different within-component correlation structures, so the pooled
sample covariance is a compromise between the two component covariances.
A unimodal Gaussian drawn from this compromise should, in principle,
differ from the two-component mixture.  We test three correlation strengths
($\rho = 0.5, 0.8, 0.9$) at two dimensionalities ($p = 10, 50$).
Across all six scenarios, rejection rates range from 2.5\% to 11.0\%
and show no increasing trend with $\rho$; even at $\rho = 0.9$, the
test does not detect the heterogeneous-covariance signal.

The reason is that the dissociation metric measures differences in
community \emph{means}, and both components of the heterogeneous-covariance
mixture have the same mean vector.  The components differ only in
their within-component correlation patterns, but Mapper communities
formed from the mixture will not exhibit systematically different mean
profiles.  The null model could in principle generate different graph
topology than the mixture, but the dissociation metric is blind to this
because it summarizes community structure through average feature
values, not through covariance or higher-order moments.

This result clarifies the scope of the structured null test.  The test
does not detect subtype structure that is fully encoded in the sample
covariance---including both all-feature and sparse mean-shift
mixtures---and it does not detect mixtures whose components differ only
in their covariance structure when the test statistic is mean-based.
The structured null test is therefore \emph{not} a general-purpose
test for mixture structure or clustering.  Researchers seeking a general test for multimodality should consider
purpose-built alternatives such as the dip
test~\citep{Hartigan1985}, Silverman's kernel-bandwidth
test~\citep{Silverman1981}, or the SigClust procedure for
high-dimensional data~\citep{Liu2008}.  The structured null test
answers a narrower but practically important question: does the Mapper
graph reveal community differentiation that exceeds what a
covariance-matched Gaussian benchmark produces under the same pipeline?
A covariance-sensitive test statistic, such as one comparing
within-community covariance matrices across communities, could in
principle detect the heterogeneous-covariance signal, but we leave this
to future work.


\section{Empirical case studies}\label{sec:results}

We apply the structured null model of Section~\ref{sec:null} to four
Mapper-based analyses.  Three come from~\cite{Lum2013}---breast cancer
gene expression, U.S.\ Congressional voting, and NBA player
performance---and the fourth is a cancer genomics analysis
from~\cite{RabadanCamara2020}.  The~\cite{Lum2013} analyses were
originally conducted in Ayasdi, a proprietary Mapper platform that
was acquired by SymphonyAI in 2019 and is no longer available as a
standalone product.  Its internal parameters do not map directly to
open-source implementations.  We do not claim to reproduce the original results exactly.  Instead,
we build open-source Mapper pipelines and tune their parameters until
the resulting graphs are qualitatively similar to those
in~\cite{Lum2013} (see \cref{sm:ayasdi} for details).  Because the
structured null uses the same pipeline parameters for both observed
and null data, imperfect replication does not bias the test: the
comparison is always between the real data and null data processed
identically.

The fourth analysis, lower-grade glioma from~\cite{RabadanCamara2020},
uses a methodologically distinct pipeline and a dataset with
well-established molecular subtypes, providing an independent check on
the framework's generalizability.

For each analysis, we describe the data and Mapper pipeline, then
report the observed dissociation metric~$D_{\mathrm{obs}}$, the
$z$-score under the label-permutation null, and the $z$-score under
the structured null.  Table~\ref{tab:results} summarizes all results.
We also report empirical one-sided Monte Carlo $p$-values,
\begin{equation}\label{eq:mc-pvalue}
  \hat{p} = \frac{1 + \#\{D^{*(b)} \ge D_{\mathrm{obs}}\}}{1 + B},
\end{equation}
where $D^{*(b)}$ is the dissociation of the $b$th null replicate and
$B$ is the total number of replicates.  The $+1$ terms in numerator
and denominator follow~\cite{Davison1997}.  They ensure that
$\hat{p}$ is never exactly zero and is uniformly distributed under
the null.
A complete replication package, including all code, intermediate
results, and instructions for obtaining the data, is available at
\texttt{[URL to be added upon publication]}.

\begin{table}[t]
\centering
\footnotesize
\caption{Summary of empirical results.  No analysis exceeds the
covariance-preserving null at $\alpha = 0.05$; the two initial
rejections (NBA all, LGG all) are driven by singleton communities and
disappear when singletons are excluded.
NKI refers to the Netherlands Cancer Institute breast cancer gene
expression dataset; LGG refers to lower-grade glioma gene expression
data; ``multi'' excludes singleton communities from the dissociation
computation.  Source gives the original study replicated; Sec.\ gives
the subsection of this paper containing the full analysis.
For each analysis, we report the number of Louvain communities~$K$,
observed dissociation~$D_{\mathrm{obs}}$, the label-permutation
$z$-score~$z_{\mathrm{perm}}$, the structured-null
$z$-score~$z_{\mathrm{str}}$, and the empirical one-sided Monte Carlo
$p$-value~$\hat{p}_{\mathrm{str}}$ computed from $B$ null replicates
via~\eqref{eq:mc-pvalue}.  The NBA analysis uses two Mapper parameter settings; rows here
correspond to the first (resolution~30, gain~2.0), with consistent
results for the second reported in Section~\ref{sec:res-nba}.
For the NBA rows, the label-permutation null was recomputed after
singleton exclusion; the nearly identical $z_{\mathrm{perm}}$ values
(3.30 vs.\ 3.29) reflect the dominance of the large communities in
the permutation variance.  For LGG (multi), the label-permutation null
was not recomputed after singleton exclusion and is omitted.}
\label{tab:results}
\setlength{\tabcolsep}{3.5pt}
\begin{tabular}{@{}lllcccccl@{}}
\toprule
\textbf{Analysis} & \textbf{Source} & \textbf{Sec.} & $K$ & $D_{\mathrm{obs}}$ & $z_{\mathrm{perm}}$ & $z_{\mathrm{str}}$ & $\hat{p}_{\mathrm{str}}$ & \textbf{Outcome} \\
\midrule
NKI 1D   & \cite{Lum2013}   & \ref{sec:res-nki}  &  6 & 0.159 & 4.40  &  0.13    & 0.24 & Not distinguished \\
NKI 2D   & \cite{Lum2013}   & \ref{sec:res-nki}  &  6 & 0.159 & 5.91  &  0.91    & 0.10 & Not distinguished \\
Voting   & \cite{Lum2013}   & \ref{sec:res-voting} &  3 & 0.630 & 22.36 & $-1.72$  & 0.95 & Not distinguished \\
NBA (all) & \cite{Lum2013}  & \ref{sec:res-nba}  & 40 & 5.343 & 3.30  &  5.51    & $<0.005$ & See text \\
NBA (multi) & \cite{Lum2013} & \ref{sec:res-nba}  & 14 & 1.803 & 3.29  & $-2.31$ & 1.00 & Not distinguished \\
LGG (all) & \cite{RabadanCamara2020} & \ref{sec:res-lgg} & 13 & 1.364 & 6.08  &  5.93    & $<0.01$ & See text \\
LGG (multi) & \cite{RabadanCamara2020} & \ref{sec:res-lgg} & 12 & 0.595 & ---   &  0.81    & 0.20 & Not distinguished \\
\bottomrule
\end{tabular}
\end{table}

\subsection{Breast cancer gene expression}\label{sec:res-nki}

The breast cancer analysis of~\cite{Lum2013} applies Mapper to the
Netherlands Cancer Institute (NKI) gene expression
dataset~\citep{vanDeVijver2002}.  The resulting community structure
is interpreted as evidence of clinically meaningful subtypes, including a
subgroup with high survival rates that does not correspond to any
standard clinical classification.  We ask whether this community
differentiation exceeds what the covariance structure alone can produce.

We replicate this analysis ($n = 336$ samples, $p = 1{,}500$
highest-variance genes, Pearson correlation distance) in two
configurations.  The first is a 1D Mapper with $L^\infty$ centrality filter
(resolution~30, gain~3.0, 5 histogram bins; 6 communities, zero
singletons).  The second adds a clinical survival indicator as
a second filter (resolution~$30 \times 5$, gain~3.0; 6 communities,
zero singletons).  For each, $B = 200$ null replicates are drawn from
$\mathcal{N}_{1500}(0, \hat{\Sigma})$, and we also compute 1,000
label-permutation replicates.  Full data-preprocessing, parameter, and
pipeline details are in \cref{sm:replication-details}.

In the one-dimensional analysis,
the Louvain algorithm detects 6 communities (sizes 109, 83, 50, 43, 27,
and 24) with an observed dissociation of $D_{\mathrm{obs}} = 0.159$.
The label-permutation null yields $z_{\mathrm{perm}} = 4.40$
($p < 0.001$), indicating that the community assignments are far more
differentiated than random label assignments would produce. Under the
label-permutation null, the specific mapping of data points to
communities carries significant information.

Under the covariance-preserving null, however, the picture changes.
The 200 null replicates drawn from $\mathcal{N}_{1500}(0,
\hat{\Sigma})$ produce a null distribution of dissociation values with
mean $\bar{D}^* = 0.136$ and standard deviation $s_{D^*} = 0.173$.
The observed dissociation of 0.159 falls well within this null
distribution, yielding $z_{\mathrm{str}} = 0.13$
($\hat{p} = 0.24$).  Because the null dissociation distribution is
right-skewed, the $z$-score should be interpreted cautiously.  The Monte Carlo $p$-value
$\hat{p} = 0.24$, which makes no distributional assumption, is the
primary inferential quantity.  The Mapper community structure observed
in the NKI data is not distinguishable from what arises from a single
multivariate normal distribution with the same covariance matrix.

The two-dimensional analysis tells a similar story: 6 communities,
$z_{\mathrm{perm}} = 5.91$ ($p < 0.001$), but
$z_{\mathrm{str}} = 0.91$ ($\hat{p} = 0.10$).  Because the null
features are drawn independently of the fixed survival indicator, any
correlation between gene expression and survival present in the real
data is severed in the null; this could inflate the null $z$-score, so
the non-rejection is conservative in this respect.  Two supplementary diagnostics agree qualitatively: replacing the
dissociation metric with the maximum over individual feature differences
($D_{\max}$), and comparing the observed graph modularity against its
null distribution, both support non-rejection (\cref{sm:empirical}).  The Mapper community structure is not
distinguishable from the covariance-preserving Gaussian benchmark.  This
does not imply that breast cancer subtypes do not exist, but that Mapper
has not provided evidence for their existence beyond what the
second-order geometry explains.  This non-rejection is stable across
sensitivity checks: none of
51 random feature splits rejects, and 11 of 12 Mapper parameter
configurations fall below the critical value (\cref{sm:robustness}).

\subsection{U.S.\ Congressional voting}\label{sec:res-voting}

The voting analysis of~\cite{Lum2013} applies Mapper to U.S.\ House
roll-call voting records and finds communities that align with political
parties, including a small group of moderate ``bridge'' members
connecting the two partisan clusters.  We ask whether the observed
community differentiation goes beyond what the correlation structure of
the votes already explains.

We replicate this analysis for the 110th Congress (2008).  The data
consist of $n = 438$ House members and $p = 688$ rollcall votes coded as
$+1$ (yea), $-1$ (nay), or $0$ (absent/abstain), with Pearson
correlation distance and PCoA filters (resolution~20, gain~4.5,
5 histogram bins).  The Mapper graph yields 3 Louvain communities with
no singleton communities.

Because $p > n$, the sample covariance matrix is singular.  We generate
null samples via the eigendecomposition of~$\hat{\Sigma}$, drawing only
in the 437-dimensional subspace where the data vary.  This avoids the
artificial isotropic component that ridge regularization would introduce
for a matrix with 251 zero eigenvalues.  We generate $B = 200$ structured
null replicates and 1,000 label-permutation replicates.  Full details
are in \cref{sm:replication-details}.

The three communities align closely with Democrats, Republicans, and a
small moderate bridge group, with an observed dissociation of
$D_{\mathrm{obs}} = 0.630$.  The label-permutation null yields an
extremely high $z_{\mathrm{perm}} = 22.36$ ($p < 0.001$), reflecting
the strong partisan structure in the data.

Under the covariance-preserving null, the observed dissociation falls
\emph{below} the null mean, yielding $z_{\mathrm{str}} = -1.72$
($\hat{p} = 0.95$).  Synthetic data drawn from a
single multivariate normal with the same covariance matrix produce
\emph{more} community differentiation in Mapper than the real voting
data.  Why?  The real voting data are strongly bimodal (two parties),
so Mapper produces a small number of internally homogeneous communities.
The Gaussian null, which preserves the elongated covariance structure but
has no bimodal gap, distributes points more smoothly along the dominant
axis.  Mapper then resolves a gradient of communities spanning the full
range, and the most extreme pair can differ more than the two-party pair
in the real data.

This result illustrates an important distinction. The partisan
communities are visually compelling and substantively meaningful---they
really do correspond to parties---but the observed community
differentiation does not exceed what the covariance-preserving null
can generate. Therefore, the Mapper graph does not add inferential
evidence for latent political subtype structure beyond the benchmark
provided by the covariance-matched Gaussian null.

\subsection{NBA player performance}\label{sec:res-nba}

The NBA analysis of~\cite{Lum2013} applies Mapper to per-minute
statistics for NBA players and identifies communities interpreted as
distinct player types, including groups of shooters, rebounders, and
versatile ``all-around'' players.  We ask whether this community
structure reflects genuine player-type differentiation beyond what the
joint distribution of statistics can produce.

We replicate this analysis: $n = 452$ players from
the 2010--2011 season, $p = 7$ per-minute statistics, Euclidean distance,
PCoA filters.  We analyze under two configurations:
Configuration~1 (resolution~30, gain~2.0, 15 histogram bins;
40 communities, 26 singletons) and Configuration~2 (resolution~20,
gain~2.5, 3 bins; 7 communities, 1 singleton).  The $7 \times 7$ covariance is well-conditioned.  We
report results both including and excluding singleton communities,
because the dissociation metric's maximum-over-pairs construction can be
dominated by singletons with extreme statistics.  For each
configuration, $B = 200$ null replicates and 1,000 label-permutation
replicates.  Full details are in \cref{sm:replication-details}.

The NBA analysis reveals an important methodological finding about the
dissociation metric. Under Configuration~1,
the Mapper graph produces 40 Louvain communities, of which 14 contain
more than one player (sizes 106, 86, 43, 41, 39, 32, 31, 12, 10, 9,
8, 4, 3, and 2; the remaining 26 are singletons; modularity 0.707).

When all 40 communities are included in the dissociation computation,
the observed dissociation is $D_{\mathrm{obs}} = 5.343$, and the
structured null yields $z_{\mathrm{str}} = 5.51$ ($\hat{p} < 0.005$).
However, when singleton communities are excluded, the observed
dissociation drops to $D_{\mathrm{obs}} = 1.803$, and the structured
null yields $z_{\mathrm{str}} = -2.31$ ($\hat{p} = 1.00$).  The
qualitative conclusion reverses: the community differentiation among
multi-player communities is \emph{below} the null mean.

Configuration~2 (resolution~20, gain~2.5; 7 communities, 1 singleton)
tells the same story: $z_{\mathrm{str}} = 8.01$ with all communities,
$z_{\mathrm{str}} = -0.85$ ($\hat{p} = 0.80$) without the singleton.

The mechanism is straightforward.  Singleton communities represent
individual players with potentially extreme per-minute statistics.
Because the dissociation metric takes the maximum over all community
pairs, a single pair involving a singleton can dominate the test
statistic.  In the null replicates, such extreme singletons are less
likely, so the null has lower dissociation.  The rejection is driven by
sensitivity to outlier communities, not by systematic differentiation.
This pattern holds across all 20 Mapper parameter configurations and
all 51 random feature splits: every configuration rejects with
singletons and none rejects without them (\cref{sm:robustness}).
The $D_{\max}$ and null modularity diagnostics described above also
agree qualitatively (\cref{sm:empirical}).

\subsection{Lower-grade glioma gene expression}\label{sec:res-lgg}

The lower-grade glioma (LGG) analysis of~\cite{RabadanCamara2020}
applies Mapper to gene expression data from The Cancer Genome Atlas and
identifies community structure related to well-established molecular
subtypes, including mutations in the IDH gene and co-deletion of
chromosome arms 1p and 19q.  Unlike the preceding three analyses, this
case comes from a separate study and uses a different Mapper pipeline,
so it provides an additional check on the framework's generalizability
and scope.  It is also a useful stress test, because many LGG subtype
signals contribute to the pooled covariance and may therefore be
absorbed by the covariance-preserving null rather than detected by it.

The data consist of $n = 534$ tumor samples and $p = 4{,}500$
highest-variance genes, with Pearson correlation distance.  The filter
functions differ from the~\cite{Lum2013} analyses: we build a
$k$-nearest-neighbor graph ($k = 30$), compute shortest-path distances,
and apply multidimensional scaling (MDS) to obtain two filter
coordinates.  We run the TDAmapper software package with fixed-width
binning (resolution~30, 75\% overlap, 10 histogram bins).

Because $p \gg n$, the sample covariance is singular.  We therefore
apply the ridge regularization described in \cref{supp:ridge} to obtain
a positive-definite covariance for null sampling.  We generate $B = 100$
null replicates, reduced from 200 due to computational cost, and 200
label-permutation replicates.  Full details are in
\cref{sm:replication-details}.

The LGG analysis reveals the same singleton sensitivity observed in
the NBA data.  The Mapper graph produces 13 Louvain communities (sizes
95, 89, 73, 60, 60, 48, 32, 27, 19, 10, 10, 9, and 1; these sum to
533 because one sample does not appear in any Mapper vertex and is
excluded from the dissociation computation) with modularity 0.738
and an observed dissociation of $D_{\mathrm{obs}} = 1.364$.  One
community is a singleton.  The label-permutation null yields
$z_{\mathrm{perm}} = 6.08$ ($p < 0.01$).

Under the structured null, the 100 replicates from $\mathcal{N}_{4500}(0,
\hat{\Sigma} + \epsilon I)$ produce a null distribution with mean
$\bar{D}^* = 0.565$ and standard deviation $s_{D^*} = 0.135$, yielding
$z_{\mathrm{str}} = 5.93$ ($\hat{p} < 0.01$; no null replicate
exceeded the observed statistic).  However, when the singleton community
is excluded, the observed dissociation drops from 1.364 to 0.595, and
the structured null yields $z_{\mathrm{str}} = \singletonExcludedLGGz$
($\hat{p} = 0.20$).  Thus, as in the NBA analysis, the apparent
rejection is driven by a singleton community rather than by systematic
differentiation among the multi-sample communities.  The
singleton-excluded $\hat{p} = 0.20$ carries additional uncertainty from the
smaller number of null replicates ($B = 100$ vs.\ 200 for the other
analyses).

Three points clarify how to interpret this non-rejection.  First, ridge regularization adds small but nonzero variance in the
$4{,}500 - 533 = 3{,}967$ null directions where the sample covariance
is zero, making the null data more isotropic and diluting the dominant
covariance directions that drive community differentiation.  The
resulting null Mapper graphs tend to have lower dissociation, making
the null \emph{easier} to beat, so this bias works against the
non-rejection finding.

Second, gene expression data can exhibit heavy tails and skewness even
after log-transformation, and our simulation study
(Section~\ref{sec:sim-typeI}) shows that such departures from normality
can produce substantial overrejection (79.5\% under a multivariate~$t$
with $\mathrm{df} = 5$).  This matters because the singleton-excluded
LGG $z$-score of $\singletonExcludedLGGz$ is positive, whereas the simulated mean-shift
mixtures produce near-zero or negative $z$-scores.  A plausible
explanation is non-Gaussianity rather than residual subtype signal:
non-Gaussianity inflates $z$-scores under the Gaussian null.

Third, lower-grade gliomas have well-established molecular
subtypes~\citep{Cancer2015lgg} with both distinct mean expression
profiles and distinct covariance structures.  Our simulation study
(Section~\ref{sec:sim-power}) shows, however, that the dissociation
metric cannot detect mean-shift mixtures whose between-component
structure is absorbed by the sample covariance, nor can it detect
mixtures differing only in covariance structure.  For real biological
subtypes like those in LGG, between-subtype mean differences contribute
to the pooled covariance estimate.  The covariance-preserving null can
therefore reproduce much of the variation associated with those
subtypes.  The singleton-excluded non-rejection is consistent with the
scope limitations seen in the simulations.

Despite the methodological differences from the preceding analyses, the
structured null test produces an interpretable result and reveals the
same singleton sensitivity as the NBA analysis, suggesting that both the
framework and the singleton finding generalize beyond any single Mapper
pipeline.


\section{Discussion and conclusions}\label{sec:discussion}

Across four empirical case studies spanning cancer genomics, political
science, and sports analytics, no analysis produces community
differentiation that exceeds the structured null at $\alpha = 0.05$
once singleton communities are accounted for.
The breast cancer and Congressional voting analyses are clearly not
distinguished from the null.  The NBA and LGG analyses initially appear
to reject, but both rejections are driven by singleton communities;
when these are excluded, the remaining community structure falls within
the null distribution.  Two results merit closer attention
(NKI 2D: $\hat{p} = 0.10$; LGG singleton-excluded:
$\hat{p} = 0.20$).  For both NBA and NKI, the non-rejection findings
are stable across alternative feature splits and Mapper parameter variation
(\cref{sm:robustness}).  In every analysis, the label-permutation null
is strongly significant ($z_{\mathrm{perm}}$ ranging from 3.29 to
22.36) while the structured null is not, illustrating how much community
differentiation covariance geometry alone can explain.  Under the independence-based null of~\cite{Lum2013}, all three of
their analyses are significant; under the covariance-preserving null,
none exceeds the benchmark once singletons are excluded.

Non-rejection does not mean the identified communities are
uninteresting.  The voting communities, for example, correspond to real
political parties.  Non-rejection means that the community
differentiation does not require subgroup structure beyond what the
covariance geometry already provides.  The singleton exclusion is a
diagnostic decision motivated by the observed discrepancy, not
pre-specified, and should be interpreted as a sensitivity analysis
rather than a confirmatory result.  More broadly, the framework not only
provides a hypothesis test but also reveals where apparent signal
originates, enabling practitioners to distinguish systematic community
differentiation from artifacts of extreme observations.

The framework has several limitations.  The Gaussian null preserves
only covariance structure, so it can overreject under non-Gaussian
unimodal data: 79.5\% under a multivariate $t$ with $\mathrm{df} = 5$,
23.0\% under skewed data (Section~\ref{sec:sim-typeI}).  This is
particularly relevant to gene expression datasets with heavy-tailed or
skewed marginals.  The null also treats $\hat{\Sigma}$ as known,
ignoring estimation uncertainty; ridge regularization addresses
singularity but adds isotropic variance, making the null easier to beat.
For non-rejection findings this bias is conservative, but simulations
confirm well-controlled Type~I error only up to $p/n \approx 1.67$,
while the empirical genomic analyses have higher ratios (NKI: 4.5; LGG:
8.4).  Shrinkage estimators~\citep{LedoitWolf2004} or factor models
could improve the null in high-dimensional settings.  The test also
conditions on fixed Mapper parameters and does not control the
family-wise error rate across the parameter space; adaptive
covers~\citep{Alvarado2025, Tao2025} could address this.  Our
replication of~\cite{Lum2013} is approximate: Ayasdi's internal
parameterization differs from our open-source implementation, and the
NKI sample size differs (336 vs.\ 272).  The consistency of
non-rejection across configurations, feature splits, and the independent
LGG analysis suggests that the conclusions are not sensitive to these choices
(\cref{sm:replication-scope}).

The dissociation metric itself is a further limitation.  It is not
sensitive to all forms of topological structure; alternative statistics
based on Betti numbers, modularity, or persistence could target
different aspects of Mapper output.  A robustness check using
$D_{\max}$, which maximizes over individual feature differences rather
than block averages, agrees qualitatively in every analysis
(\cref{sm:empirical}).  We use a one-sided test ($z > 1.645$) because
the scientific question is whether the data contain more structure than
the null can explain.  The voting result ($z = -1.72$) illustrates that
bimodal data can produce \emph{less} differentiation than the smooth
Gaussian gradient.  More fundamentally, the dissociation metric misses
mean-shift mixtures absorbed by the covariance, is blind to
covariance-only alternatives (Section~\ref{sec:sim-power}), and is
highly sensitive to singleton communities, as the NBA and LGG analyses
show.  Developing statistics that are robust to small communities and
responsive to a broader range of alternatives would extend the
framework's practical reach.

For practitioners, three recommendations follow.  Mapper-based
subtype claims should be accompanied by null-model tests that preserve
the covariance structure of the data, since label-permutation baselines
are uninformative (Theorem~\ref{thm:permutation}).  Results should
be reported both with and without singleton communities, since a large
discrepancy signals statistic instability.  Rejections in
datasets with non-Gaussian marginals should be interpreted cautiously.
Computational cost scales with pipeline complexity; for the LGG
analysis, 100 null replicates are already expensive.

Mapper remains a valuable exploratory tool whose ability to produce
interpretable graph summaries of high-dimensional data is not diminished
by our findings.  What our results show is that Mapper's exploratory
output should not be interpreted as confirmatory evidence for the
existence of subtypes without additional statistical testing.  The
general principle---testing observed combinatorial summaries against a
covariance-matched null---could inform null-model development for other
TDA methods such as Reeb graphs or simplicial complexes, though the
appropriate test statistics and null distributions would need to be
developed for each setting.

\section*{Acknowledgments}
The author used Claude (Anthropic, Claude Opus 4) for manuscript editing
assistance, including sentence-level revisions, style auditing, and
checking for internal consistency.  All output was verified by the
author, who takes full responsibility for the content.
All datasets used in this study are publicly available: the NKI breast
cancer dataset via the \texttt{breastCancerNKI} Bioconductor package,
Congressional voting data from VoteView (\texttt{voteview.com}), NBA
player statistics from Basketball Reference
(\texttt{basketball-reference.com}), and TCGA lower-grade glioma data
from the CamaraLab GitHub repository.  All code for replication is
available at \texttt{[URL to be added upon publication]}.

\appendix

\section{Null-model audit of published Mapper applications}\label{sm:audit}

Table~\ref{tab:null-audit-sm} summarizes a targeted but not exhaustive
audit of null-model testing across published Mapper applications that
claim to discover subtypes or structural groups.  For each analysis, we record whether the Mapper graph
structure was tested against a null model and, if so, what null was
used.  ``Independence-based null'' refers to a null that tests only
against unstructured noise, destroying the covariance geometry of the
data.  Phase-randomization nulls preserve aspects of temporal dependence,
such as the power spectrum and autocorrelation structure of each time
series, making them analogous in spirit to the covariance-preserving
logic proposed in the main text for cross-sectional data.

\begin{table}[ht]
\centering
\scriptsize
\caption{Null-model testing in published Mapper applications relevant
to subtype discovery, structural group claims, or Mapper validation.}
\label{tab:null-audit-sm}
\begin{tabular}{@{}lllll@{}}
\toprule
\textbf{Reference} & \textbf{Domain} & \textbf{Structure claimed} & \textbf{Null?} & \textbf{Null model \& assessment} \\
\midrule

\cite{Nicolau2011}
  & Breast cancer
  & c-MYB+ subgroup
  & No
  & Biological validation only \\[4pt]

\cite{Lum2013}
  & Breast cancer
  & Y-shape, flares
  & Yes
  & Indep.\ Gaussians, common $\sigma^2$ \\[4pt]

\cite{Lum2013}
  & Pol.\ science
  & Partisan subcommunities
  & Yes
  & Same independence-based null \\[4pt]

\cite{Lum2013}
  & Sports
  & 13 player types
  & Yes
  & Same independence-based null \\[4pt]

\cite{Li2015}
  & Type 2 diabetes
  & 3 diabetes subgroups
  & No
  & External genetic validation \\[4pt]

\cite{Yao2009}
  & Protein folding
  & Two refolding pathways
  & No
  & Kinetic (Markov) validation \\[4pt]

\cite{Rizvi2017}
  & scRNA-seq
  & Transient cell states
  & Partial
  & PH test for loops only \\[4pt]

\cite{Saggar2018}
  & Neuroimaging
  & Task communities
  & Partial
  & Phase-randomization null \\[4pt]

\cite{Geniesse2022}
  & Neuroimaging
  & Replicates Saggar
  & Partial
  & Phase-randomization null \\[4pt]

\cite{Hasegan2024}
  & Neuroimaging
  & Parameter sensitivity
  & No
  & GOF criteria, no null model \\[4pt]

\cite{Rafique2020}
  & Cancer (TCGA)
  & Cancer subtypes
  & No
  & Survival analysis only \\[4pt]

\cite{Lum2012}
  & Pol.\ science
  & Voting fragmentation
  & No
  & Descriptive analysis only \\[4pt]

\cite{Wamil2023}
  & Type 2 diabetes
  & 4 diabetes phenotypes
  & No
  & Survival analysis only \\

\bottomrule
\end{tabular}

\medskip
\raggedright
\scriptsize
\textbf{Abbreviations and terms.}
Pol.: political.
c-MYB+: subgroup defined by expression of the \emph{c-MYB} oncogene.
scRNA-seq: single-cell RNA sequencing.
PH: persistent homology.
GOF: goodness of fit.
TCGA: The Cancer Genome Atlas.
Phase-randomization null: a null that randomly permutes the Fourier
phases of each time series, preserving its power spectrum and
autocorrelation structure.
\end{table}

\section{Replication methodology}\label{sm:replication}

Our replications target four published Mapper analyses: three from
\cite{Lum2013} (breast cancer, congressional voting, and NBA player
performance) and one from~\cite{RabadanCamara2020} (lower-grade glioma).
Because the original~\cite{Lum2013} analyses used a proprietary platform
whose parameterization differs from open-source tools, we first describe
how we translated the Mapper pipeline, then give the full parameter
specifications for each case study.

\subsection{Ayasdi-to-open-source Mapper translation}\label{sm:ayasdi}

The original analyses in~\cite{Lum2013} were conducted in Ayasdi, a
proprietary platform whose internal parameterization differs from
open-source Mapper implementations.  Ayasdi's ``Resolution'' and ``Gain''
parameters do not correspond directly to the number of intervals and
overlap percentage in open-source tools such as TDAmapper
\citep{Tauzin2021} or KeplerMapper \citep{vanVeen2019}.  Moreover,
Ayasdi's ``Equalized'' mode uses quantile-based binning, in which each
cover interval is chosen to contain approximately the same number of data
points.  Standard open-source implementations instead use equal-width
intervals, which can produce nearly empty bins in the tails of the filter
distribution and excessive fragmentation of the graph.

To address this, we implement a custom equalized (quantile-based) Mapper
that mimics Ayasdi's binning strategy.  Cover intervals are defined by
equally spaced quantiles of the filter function, so that each interval
contains roughly the same number of points regardless of the filter's
marginal distribution.  Within each interval (or, in the two-dimensional
case, each grid cell), we apply single-linkage hierarchical clustering
and cut the dendrogram at the first gap (first empty bin) in a histogram
of merge heights, following the original algorithm of~\cite{Singh2007}.
The number of histogram bins is a hyperparameter reported for each
analysis in the main text.  This matches the clustering method reported in
\cite{Lum2013}.

Because the parameter mapping between Ayasdi and open-source Mapper is
not exact, we cannot simply use the parameter values reported in
\cite{Lum2013} (e.g., Resolution~70 for breast cancer, Resolution~120
for voting).  Applying these values to our equalized Mapper produces
graphs dominated by singleton vertices---far more fragmented than the
graphs shown in~\cite{Lum2013}.  To find parameter settings that produce
comparable graph structures, we conducted systematic parameter sweeps for
each dataset, varying resolution, gain, and the number of histogram bins
used in the clustering step.  We selected parameters that produce graphs
with community structure consistent with the results reported in
\cite{Lum2013}: similar numbers of communities, similar community size
distributions, well-defined community structure, and minimal singleton
vertices.
The specific parameters for each analysis are reported in the main text.

This parameter-selection step is an unavoidable consequence of
replicating proprietary-software analyses with open-source tools.  All null-model comparisons use identical Mapper parameters for the
observed data and each null replicate, so the Monte Carlo comparison
is valid conditional on the selected parameters.  We interpret the
resulting tests as conditional on this replication-oriented
parameter-selection step.

\subsection{Detailed replication methods}\label{sm:replication-details}

The parameter specifications and implementation details below
supplement the methodological summaries in the main text, providing
the complete recipes needed for reproduction.

For each structured-null replicate, we recompute the distance matrix,
filter coordinates, Mapper graph, Louvain communities, point-to-community
assignments, and test statistic from the synthetic data, using the same
pipeline and parameters as in the observed analysis.  Exceptions (such as
the NKI survival indicator being reused with fresh jitter) are noted
in the relevant case study.  Graph communities are detected using
\texttt{igraph::cluster\_louvain}, which is stochastic (Louvain
optimization involves random tie-breaking).  The same algorithm is
applied to both observed and null Mapper graphs.  For each null
replicate, the random seed is set before the full pipeline call, so
Louvain tie-breaking is determined by the replicate seed; the observed
Mapper graph uses a single Louvain realization.

\subsubsection*{Breast cancer gene expression (NKI)}

We obtained the NKI dataset from the \texttt{breastCancerNKI}
Bioconductor package.  After removing samples with more than 10\%
missing values across genes, our dataset contains $n = 336$ samples.
\cite{Lum2013} report $n = 272$.  The original NKI cohort
\citep{vanDeVijver2002} contained 295 samples; the
\texttt{breastCancerNKI} package provides an expanded aggregate dataset,
which explains why our sample size exceeds both the original cohort and
the subset used in~\cite{Lum2013}.  We selected the
$p = 1{,}500$ genes with the highest sample variance (computed with
\texttt{na.rm = TRUE} to handle missing entries), matching the description
in~\cite{Lum2013} of using the ``top most varying genes.''
Their Supplementary Figure~S1 shows $\approx$1,553 genes.  Remaining missing
values were imputed with column means, and the resulting
$336 \times 1{,}500$ matrix was centered and scaled to unit variance.

Following~\cite{Lum2013}, we used Pearson correlation distance:
$d(x_i, x_j) = 1 - \mathrm{cor}(x_i, x_j)$, computed across the 1,500
gene expression values for each pair of samples.

For the one-dimensional analysis, the filter function is the $L^\infty$
centrality of each sample: $f(x_i) = \max_j\, d(x_i, x_j)$.  For the
two-dimensional analysis, the first filter is again $L^\infty$
centrality; the second filter is the clinical survival indicator
(\texttt{event\_death}: 1 if the patient died during follow-up, 0
otherwise), as reported in Figure~2 of~\cite{Lum2013}.  Because this
variable is binary, we add independent Gaussian jitter with standard
deviation 0.01 to avoid degenerate quantile bins.

For the one-dimensional analysis, we use resolution $N = 30$, gain
$g = 3.0$, and 5 histogram bins for the clustering step.  This produces
a comparable graph (6 communities with sizes 109, 83, 50, 43, 27,
and 24; modularity 0.648; zero singleton vertices).  For the
two-dimensional analysis, the first filter uses resolution~30 and the
second filter uses resolution~5; gain is 3.0 for both filters, and
clustering uses 5 histogram bins.  This produces 6 communities (sizes 86,
71, 54, 50, 43, and 32; zero singletons).

Because $p = 1{,}500 > n = 336$, the sample covariance matrix
$\hat{\Sigma}$ is singular, with rank at most $n - 1 = 335$.  We
generate null replicates via the eigendecomposition of
$\hat{\Sigma}$: decompose $\hat{\Sigma} = V \Lambda V^\top$ with
$r$ positive eigenvalues $\lambda_1, \ldots, \lambda_r$, draw
$Z \in \R^{n \times r}$ with iid standard normal entries, and set
$X^* = Z\, \Lambda_r^{1/2}\, V_r^\top$.  This draws from the
degenerate Gaussian supported on the column space of $\hat{\Sigma}$
(in our implementation, \texttt{MASS::mvrnorm} performs this
eigendecomposition internally).  For both analyses, we generate
$B = 200$ structured null replicates.  In the two-dimensional
analysis, the survival indicator is reused with fresh jitter for each
null replicate.  We also compute 1,000 label-permutation replicates.

\subsubsection*{U.S.\ Congressional voting}

Rollcall voting data were obtained from VoteView
(\texttt{voteview.com}).  For the 110th Congress (2008), the dataset
contains $n = 438$ House members and $p = 688$ rollcall votes after
filtering to members who participated in at least 10\% of votes.  Each
vote is coded as $+1$ (yea), $-1$ (nay), or $0$ (absent or abstain).
We centered and scaled all votes to unit variance.
Because $p > n$, the sample covariance matrix is singular, with rank at
most $n - 1 = 437$.

We used Pearson correlation distance between members.  The two filter
functions are the first and second coordinates of a principal coordinate
analysis (PCoA) applied to the correlation distance matrix.

We use resolution~20 for both filters, gain~4.5, and 5 histogram bins.
This produces a clean partisan structure (3 communities corresponding to
Democrats, Republicans, and a moderate bridge group; zero singletons).

Both the observed Mapper graph and every null replicate use the full set
of $p = 688$ scaled votes.  We generate null samples via the
eigendecomposition of~$\hat{\Sigma}$, drawing random variation only in
the 437-dimensional subspace where the data actually vary, exactly as
described for NKI above.  We generate $B = 200$ structured null
replicates and 1,000 label-permutation replicates.

\subsubsection*{NBA player performance}

Player statistics for the 2010--2011 NBA season were obtained from
Basketball Reference.  Following~\cite{Lum2013}, we compute per-minute
rates for $p = 7$ statistics: total rebounds, assists, turnovers, steals,
blocked shots, personal fouls, and points scored.  Players with fewer
than 1 total minute played are excluded, yielding $n = 452$ players.

We use variance-normalized Euclidean distance and PCoA filters.
Configuration~1 (paper-matching) uses resolution~30, gain~2.0, and 15
histogram bins; Configuration~2 (sweep-recommended) uses resolution~20,
gain~2.5, and 3 histogram bins.

The $7 \times 7$ sample covariance matrix is well-conditioned.  For each
configuration, we compute $B = 200$ structured null replicates and 1,000
label-permutation replicates.

\subsubsection*{Lower-grade glioma (LGG)}

RNA-seq gene expression data for TCGA LGG samples were obtained from
the CamaraLab GitHub repository.  The repository provides
log$_2$-transformed expression values, which we used without additional
transformation or centering/scaling in order to match the preprocessing
used in the original pipeline.  We applied no additional sample filtering
beyond what the repository provides.  We selected the
$p = 4{,}500$ genes with the highest variance, yielding $n = 534$
samples.  We use Pearson correlation distance.  The filter functions are
the first two coordinates of classical MDS applied to shortest-path
distances in a directed $k$-nearest-neighbor graph ($k = 30$), built
using \texttt{cccd::nng} following the original code
of~\cite{RabadanCamara2020}.  Any infinite shortest-path distances
(from unreachable pairs in the directed graph) are replaced by twice the
maximum finite distance.  Following the original code
of~\cite{RabadanCamara2020}, we pass the directed shortest-path matrix
directly to \texttt{cmdscale}.  Classical MDS is defined for symmetric
dissimilarity matrices; the directed shortest-path matrix is in general
asymmetric.  In practice, \texttt{cmdscale} operates on the double-centered matrix
$-\tfrac{1}{2} H D H$ where $H = I - \mathbf{1}\mathbf{1}^\top/n$ and
$D$ is the input; when $D$ is asymmetric, this implicitly produces a
symmetric kernel.  We retain this step for faithful replication of the
published pipeline rather than as a canonical MDS construction.

We use TDAmapper with fixed-width binning, resolution~30 for both
filters, 75\% overlap, and 10 histogram bins.  The sample covariance
matrix is singular ($p = 4{,}500 > n = 534$); we apply ridge
regularization with $\epsilon$ chosen as in \cref{supp:ridge}.  We
generate $B = 100$ structured null replicates and 200 label-permutation
replicates.

\section{Simulation study details}\label{sm:simulation}

The simulation study evaluates the structured null model's Type~I error
calibration under the null hypothesis and its behavior under
mean-shift and covariance-shift alternatives.  All scenarios share a
common Mapper pipeline; they differ only in the data-generating process.

\subsection{Data-generating process specifications}\label{sm:dgp}

Each simulation scenario generates $R = 200$ independent datasets of
$n = 300$ observations.  For the skewed and mixture DGPs, we center and scale each simulated
dataset featurewise before computing distances,
filters, Mapper graphs, and the covariance matrix used for null sampling.
For the mixture DGPs, $\delta$ denotes the component separation before
this centering and scaling step.
The spherical Gaussian, correlated Gaussian, and multivariate~$t$ DGPs
are already zero-mean and are not rescaled.  The Mapper pipeline uses two-dimensional
equalized Mapper with PCoA filters, resolution~15 per filter,
gain~2.0, and 10 histogram bins for clustering.  Each test uses
$B = 50$ null replicates.  As a Monte Carlo precision check, we also
run the correlated Gaussian $p = 10$ scenario with $B = 100$; the
rejection rate is essentially unchanged (5.0\% vs.\ 6.0\%), confirming
that $B = 50$ provides adequate calibration.

The data-generating processes are:

\begin{enumerate}
\item \textbf{Spherical Gaussian} ($p = 10$).  $X \sim \mathcal{N}_{10}(0, I_{10})$.

\item \textbf{Correlated Gaussian} ($p \in \{10, 50, 500\}$).
$X \sim \mathcal{N}_p(0, \Sigma)$ with block covariance: features are
split into two equal-sized groups, features within the same group have
pairwise correlation $\rho = 0.5$, and features in different groups are
independent.

\item \textbf{Multivariate $t$} ($p = 10$, $\mathrm{df} = 5$).
$Z \sim \mathcal{N}_p(0, \Sigma)$, $V \sim \chi^2_\nu$ independent of~$Z$,
$X = Z / \sqrt{V/\nu}$ with $\nu = 5$.  The marginal covariance is
$\frac{\nu}{\nu - 2}\Sigma = \tfrac{5}{3}\Sigma$; the structured null
estimates this inflated covariance from the sample covariance, so the
null matches the second-order scale of the data but not its heavy-tailed
distribution.

\item \textbf{Skewed unimodal} ($p = 10$).  $Z \sim \mathcal{N}_p(0,
\Sigma)$ followed by componentwise $\exp(0.5\, Z_j)$, then re-centered
and scaled.

\item \textbf{All-feature mean-shift mixture} ($p \in \{10, 50\}$).
Equal-weight mixture: $\mathcal{N}_p(+\delta \mathbf{1}/2, \Sigma)$ and
$\mathcal{N}_p(-\delta \mathbf{1}/2, \Sigma)$ with
$\delta \in \{0.5, 1.0, 2.0\}$, where $\Sigma$ is the block covariance
from item~2 with $\rho = 0.5$.

\item \textbf{Sparse mean-shift mixture} ($p = 50$).  Equal-weight
mixture with $\Sigma$ as in item~2 ($\rho = 0.5$), but only the first
$k = 5$ of 50 features are shifted by $\pm \delta / 2$
($\delta \in \{1.0, 2.0\}$).

\item \textbf{Heterogeneous-covariance mixture} ($p \in \{10, 50\}$).
Equal-weight, zero-mean mixture with different within-component
correlation: component~1 has correlated features ($\rho$) in the first
half and independent in the second; component~2 has the reverse.
$\rho \in \{0.5, 0.8, 0.9\}$.
\end{enumerate}

\section{Robustness and parameter sensitivity}\label{sm:robustness}

The main-text analyses each use a single set of Mapper parameters and a
single deterministic feature split.  Here we vary both to assess
sensitivity: first across random feature splits, then across a grid of
Mapper resolution and gain values.

\subsection{Feature-split robustness}\label{sm:feature-splits}

To assess whether results depend on the particular choice of feature
split, we repeat the structured null test using 50 independent random
50/50 feature splits in addition to
the original odd/even split, for both the NBA and NKI breast cancer
datasets.

For the NBA dataset under Configuration~1 (resolution~30, gain~2.0),
when all 40 communities (including 26 singletons) are included,
$z$-scores range from 0.48 to 9.60 with mean 5.60, and 50 of 51 splits
reject at $\alpha = 0.05$.  When singleton communities are excluded, all
51 splits produce negative $z$-scores (range $-3.65$ to $-1.07$, mean
$-2.48$), and none rejects.  Configuration~2 tells the same story: 51 of
51 splits reject with all communities (mean $z = 6.22$); only 1 of 51
rejects without singletons (mean $z = -0.97$).  The singleton artifact
is robust across feature splits.

For the NKI breast cancer dataset (1D Mapper, resolution~30, gain~3.0),
the 51 $z$-scores range from $-0.07$ to 0.77 with mean 0.17.  None of
the 51 splits rejects at $\alpha = 0.05$.

\subsection{Parameter sensitivity}\label{sm:param-sensitivity}

For the NBA dataset, we test resolution $\in \{10, 15, 20, 25, 30\}$ and
gain $\in \{1.5, 2.0, 2.5, 3.0\}$ (20 configurations), with 3 histogram
bins and $B = 100$ null replicates per configuration.  When all
communities are included, $z$-scores range from 3.0 to 27.7, and all 20
configurations reject.  When singletons are excluded, all 20
configurations produce negative $z$-scores (range $-3.54$ to $-0.39$),
and none rejects.

For the NKI breast cancer dataset, we test resolution $\in \{15, 20, 25,
30\}$ and gain $\in \{2.0, 2.5, 3.0\}$ (12 configurations).  The
$z$-scores range from $-0.19$ to $3.34$, with 11 of 12 below the
critical value.  One configuration (resolution~20, gain~3.0) produces $z = 3.34$.
Because the grid was explored as a sensitivity analysis rather than a
pre-specified confirmatory test, we do not interpret this isolated
rejection as confirmatory.  It does, however, identify a parameter
setting where the conclusion differs.

We did not vary the number of histogram bins; the bins control the
granularity of within-preimage clustering and are less influential than
resolution and gain.

\section{Detailed empirical results}\label{sm:empirical}

We report two additional diagnostic analyses for the empirical case
studies.  The first replaces the block-averaged dissociation statistic
with a single-feature maximum; the second examines the modularity
distributions produced by the null model.

\subsection{$D_{\max}$ analysis}

As a sensitivity check, we compute the largest absolute difference in
community means across all features and community pairs:
\[
  D_{\max} = \max_{k < \ell} \max_j
  |\bar{x}_{kj} - \bar{x}_{\ell j}|.
\]
This statistic avoids the block-averaging step and is sensitive to
single-feature differences.  We use the same observed and null Mapper
graphs; only the graph-level test statistic changes.

For the NKI breast cancer 1D analysis: observed $D_{\max} = 2.44$, null
mean 1.66 (sd 1.47), $z = 0.53$, $\hat{p} = 0.29$.  The $D_{\max}$
result agrees with the block-averaged dissociation.

For the NBA analysis under Configuration~1: $z = 18.4$ with all
communities, $z = -3.6$ without singletons.  Under Configuration~2:
$z = 15.1$ with all communities, $z = -1.5$ without.  The
singleton-driven pattern is robust to the choice of test statistic.

\subsection{Null modularity distributions}

The null modularity distribution provides context about the graph-level
structure produced under the null.

For NKI 1D: the observed graph has modularity $Q = 0.648$; the null
replicates have mean modularity $0.655$ (sd $0.021$, 95\% interval
$[0.604, 0.687]$) and produce $5.6$ communities on average.  Both the
observed modularity and the number of communities fall within the null
range.

For the NBA under Configuration~1: the null Mapper graphs have mean
modularity $0.938$ (sd $0.007$) and produce $103$ communities on average,
compared to the observed $Q = 0.707$ with 40 communities.  Under
Configuration~2: null mean $Q = 0.732$ (sd $0.022$) with $8.5$
communities versus the observed $Q = 0.452$ with 7.  In both cases, the
null graphs are more modular than the observed graph, because Louvain
partitions the diffuse null data into many small, tightly separated
communities.

\section{Scope of replication}\label{sm:replication-scope}

Our replication of~\cite{Lum2013} is approximate in several respects.
The original analyses were conducted in Ayasdi, whose internal
parameterization differs from open-source Mapper implementations
(\cref{sm:ayasdi}).  We matched the graph structure as closely as
possible through parameter sweeps and equalized binning, but we cannot
guarantee that our Mapper graphs are identical to those of
\cite{Lum2013}.

The NKI dataset we use contains 336 samples rather than the 272 reported
in~\cite{Lum2013}, likely due to different filtering criteria that we
were unable to determine from the published methods.  These discrepancies
introduce uncertainty about whether our replication faithfully represents
the original analysis.  However, the non-rejection finding is consistent
across both the 1D and 2D configurations, with $z$-scores near zero in
both cases.  The feature-split robustness analysis
(\cref{sm:feature-splits}), in which all 51 splits produce
non-rejection, further suggests that the conclusion is stable.

For the NBA analysis, the replication is more straightforward: the
7-dimensional data are simple, the sample size matches exactly
($n = 452$), and the Mapper parameters translated well.  Our open-source
equalized Mapper produces 14 multi-player communities (40 total,
including 26 singletons) under the paper-matching configuration,
whereas~\cite{Lum2013} report 13 total clusters in Ayasdi; the
difference reflects proprietary implementation details.

The LGG analysis (\citealp{RabadanCamara2020}) serves a distinct role:
rather than replicating~\cite{Lum2013}, it provides an external
validation of the structured null framework in a methodologically
different setting.  Our voting analysis uses a single Congress (2008);
the temporal dynamics documented in~\cite{Lum2013}---particularly the
shrinking of the moderate ``Central Group'' over time---might produce
different results under the structured null in earlier, more bipartisan
eras.

\section{Statistical validation literature}\label{sm:validation}

Within TDA, hypothesis testing has developed primarily for persistent
homology, a technique that tracks the appearance and disappearance of
topological features (connected components, loops, voids) as one sweeps
through a sequence of increasingly coarse approximations to the data.
\cite{Bobrowski2023} provided evidence for a universality law,
suggesting that the distribution of persistence values---the ratio of the
scale at which a feature disappears to the scale at which it
appears---in standard persistence diagrams of random point clouds may
converge to a left-skewed Gumbel distribution, independent of the
underlying data-generating distribution.
This suggests a parameter-free null for assessing individual persistent
homology features.  Complementary approaches include confidence regions
via the bottleneck bootstrap \citep{CarriereMichelOudot2018} and
permutation-based tests for persistence diagrams; see~\cite{Chazal2021}
for a survey.

\section{Ridge regularization details}\label{supp:ridge}

For analyses where ridge regularization is used (here, only the LGG
analysis; see the case-study descriptions above), we replace
$\hat{\Sigma}$ with $\hat{\Sigma} + \epsilon I_p$.  We compute the
eigenvalues of $\hat{\Sigma}$ and check the minimum eigenvalue
$\lambda_{\min}$.  When $\lambda_{\min}$ falls below $10^{-10}$, we
set
\begin{equation}\label{eq:ridge-epsilon}
  \epsilon = \max(0, -\lambda_{\min}) + 10^{-6},
\end{equation}
which shifts all eigenvalues upward and ensures that the smallest
eigenvalue of the regularized matrix is at least $10^{-6}$.  The
$\max(0, -\lambda_{\min})$ term accounts for the possibility that
floating-point arithmetic produces a small negative eigenvalue; in
exact arithmetic, covariance matrices are positive semidefinite and all
eigenvalues are non-negative.

\section{Proofs of theoretical results}\label{supp:proofs}

We restate each result from the main text and provide its proof.
Theorem~3.1 establishes that covariance structure alone produces
nonzero dissociation under a linear filter; Theorem~3.2 shows that
label permutation targets a different null hypothesis whose population
dissociation is zero.

\subsection{Proof of Theorem~3.1 (Covariance-driven dissociation)}

\settheoremnum{3.1}
\begin{theorem}
  Let $X \sim \mathcal{N}_p(0, \Sigma)$ with $\Sigma$ positive definite
  and $\lambda_1 > \lambda_2$, so that the unit leading eigenvector~$u_1$ is
  uniquely defined up to sign.  Let $f(x) = u_1^\top x$ and let $A, B$
  partition $\{1, \ldots, p\}$ into two nonempty blocks.
  Let $\{I_1,\ldots,I_K\}$ be a finite interval partition of~$\R$
  with $K \ge 2$ and each interval having positive probability
  under~$f$.  Define the population block mean
  $m_{k,A} = \frac{1}{|A|}\sum_{j \in A} E[X_j \mid f \in I_k]$
  and similarly $m_{k,B}$, and set
  \[
    D_{\mathrm{pop}} = \max_{k < \ell}
    \max\!\bigl(|m_{k,A} - m_{\ell,A}|,\;
    |m_{k,B} - m_{\ell,B}|\bigr).
  \]
  If $\max(|\bar{u}_{1,A}|, |\bar{u}_{1,B}|) > 0$, then
  \[
    D_{\mathrm{pop}} \;\ge\; \max\!\bigl(|\bar{u}_{1,A}|,\;
    |\bar{u}_{1,B}|\bigr) \cdot
    \bigl|E[f \mid f \in I_{\max}] - E[f \mid f \in I_{\min}]\bigr|
    \;>\; 0,
  \]
  where $I_{\min}$ and $I_{\max}$ denote the leftmost and rightmost
  intervals with positive probability.
\end{theorem}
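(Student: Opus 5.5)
The plan is to reduce everything to the one-dimensional law of the filter via the Gaussian regression identity~\eqref{eq:gaussian-regression}, and then to show that conditional means of $f$ over ordered intervals are strictly ordered.

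\textbf{Step 1: conditional means of single coordinates.} Since $\Sigma u_1 = \lambda_1 u_1$, we have $\mathrm{Cov}(X, f) = \Sigma u_1 = \lambda_1 u_1$ and $\mathrm{Var}(f) = \lambda_1$. Joint Gaussianity of $(X,f)$ then gives $E[X_j \mid f] = (u_1)_j f$; equivalently, use the decomposition~\eqref{eq:decomposition}, where $Z = X - u_1 f$ is uncorrelated with, and hence independent of, $f$, and has mean zero. Conditioning on the event $\{f \in I_k\}$, which has positive probability, and using the tower property yields
\[
E[X_j \mid f \in I_k] = (u_1)_j\, E[f \mid f \in I_k].
\]
Averaging over $j \in A$ gives $m_{k,A} = \bar{u}_{1,A}\, e_k$, where $e_k := E[f \mid f \in I_k]$. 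Similarly $m_{k,B} = \bar{u}_{1,B}\, e_k$.

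\textbf{Step 2: lower bound on $D_{\mathrm{pop}}$.} Take the specific pair $(k,\ell)$ given by $I_{\min}$ and $I_{\max}$. These are distinct because $K \ge 2$ and every interval has positive probability. The inner maximum for this pair equals $\max(|\bar u_{1,A}|,|\bar u_{1,B}|)\,|e_{\max} - e_{\min}|$, and $D_{\mathrm{pop}}$ is at least this quantity. This proves the first inequality.

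\textbf{Step 3: strict positivity.} This is the only step requiring care. Since $f \sim N(0,\lambda_1)$ with $\lambda_1 > 0$, $f$ has a positive density on all of $\R$. Conditional on $f \in I_{\min}$, $f$ lies in $I_{\min}$; conditional on $f \in I_{\max}$, it lies in $I_{\max}$. The intervals of a partition are disjoint, and $I_{\min}$ lies entirely to the left of $I_{\max}$. So there is a point $c$ with $I_{\min} \subseteq (-\infty, c]$ and $I_{\max} \subseteq [c, \infty)$, and hence $e_{\min} \le c \le e_{\max}$.

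Equality $e_{\min} = c$ would force $f = c$ almost surely on $\{f \in I_{\min}\}$, which is impossible because $f$ is continuous and $P(f \in I_{\min}) > 0$. Therefore $e_{\min} < c \le e_{\max}$, so $|e_{\max} - e_{\min}| > 0$. Combined with the hypothesis $\max(|\bar u_{1,A}|,|\bar u_{1,B}|) > 0$, the product is strictly positive.

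A remark on possibly unbounded intervals: the conditional expectations are still finite, since Gaussian tails make $E[|f|] < \infty$.
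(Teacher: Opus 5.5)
Your proof is correct and follows the paper's argument essentially step for step. It uses the Gaussian regression identity plus the tower property to factor $m_{k,A} = \bar u_{1,A}\,E[f \mid f \in I_k]$, restricts to the extreme pair $(I_{\min}, I_{\max})$ for the lower bound, and uses strict ordering of truncated means for positivity; your separating-point argument giving $e_{\min} < c \le e_{\max}$ is just a tighter rendering of the paper's Step~5, needing only that $f$ is atomless rather than appealing to a positive density.
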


\begin{proof}
  The argument proceeds in three stages.  First
  (Steps~1--3), we use properties of the multivariate Gaussian to show
  that each block mean $m_{k,A}$ factors into a \emph{loading term}
  $\bar{u}_{1,A}$ (determined by the covariance structure) times a
  \emph{filter term} $E[f \mid f \in I_k]$ (determined by which interval
  we condition on).  Second (Steps~4--5), we show that the filter terms
  are strictly monotone across intervals, so block means in different
  intervals must differ.  Third (Steps~6--7), we combine these facts to
  obtain the lower bound and conclude $D_{\mathrm{pop}} > 0$.

  \emph{Step~1: Conditional expectation of a Gaussian given a linear
  projection.}
  When a random vector $X$ is jointly Gaussian with a scalar
  $f = u_1^\top X$, the conditional expectation of~$X$ given $f = c$
  is the linear regression of~$X$ on~$f$:
  \[
    E[X \mid f = c]
    = \frac{\operatorname{Cov}(X, f)}{\operatorname{Var}(f)} \cdot c.
  \]
  This is a standard property of the multivariate Gaussian: for any
  jointly Gaussian $(X, f)$ with $E[X] = 0$ and $E[f] = 0$, the
  conditional expectation is linear in~$f$, with slope given by the
  covariance of~$X$ with~$f$ divided by the variance of~$f$.

  We now evaluate the covariance and variance.  Since $f = u_1^\top X$
  and $X$ has covariance~$\Sigma$,
  \[
    \operatorname{Cov}(X, f)
    = \operatorname{Cov}(X,\, u_1^\top X)
    = \Sigma\, u_1,
  \]
  and
  \[
    \operatorname{Var}(f)
    = \operatorname{Var}(u_1^\top X)
    = u_1^\top \Sigma\, u_1.
  \]
  Because $u_1$ is the leading eigenvector of~$\Sigma$ with
  eigenvalue~$\lambda_1$, we have $\Sigma\, u_1 = \lambda_1 u_1$ and
  $u_1^\top \Sigma\, u_1 = \lambda_1 u_1^\top u_1 = \lambda_1$ (since
  $u_1$ is a unit vector).  Substituting:
  \[
    E[X \mid f = c]
    = \frac{\lambda_1 u_1}{\lambda_1} \cdot c
    = u_1\, c.
  \]
  In particular, for each coordinate~$j$,
  \begin{equation}\label{eq:cond-exp-coord}
    E[X_j \mid f = c] = (u_1)_j \cdot c.
  \end{equation}

  \emph{Step~2: From pointwise conditioning to interval conditioning.}
  Step~1 gives $E[X_j \mid f = c]$ for a specific value~$c$, but the
  theorem conditions on $f \in I_k$ (an interval) rather than
  $f = c$ (a point).  We pass between the two using the tower property
  of conditional expectation:
  \[
    E[X_j \mid f \in I_k]
    = E\bigl[\,E[X_j \mid f]\;\big|\; f \in I_k\bigr].
  \]
  The inner expectation is $E[X_j \mid f = c] = (u_1)_j \cdot c$
  from~\eqref{eq:cond-exp-coord}.  Substituting and using linearity
  of expectation:
  \[
    E[X_j \mid f \in I_k]
    = E\bigl[(u_1)_j \cdot f \;\big|\; f \in I_k\bigr]
    = (u_1)_j \cdot E[f \mid f \in I_k].
  \]

  \emph{Step~3: Block averaging.}
  We now connect the coordinatewise result from Step~2 to the
  block-level quantity $m_{k,A}$ that appears in the dissociation
  metric.  The population block mean for community~$k$ and block~$A$ is
  $m_{k,A} = \frac{1}{|A|}\sum_{j \in A} E[X_j \mid f \in I_k]$.
  Substituting the result of Step~2:
  \[
    m_{k,A}
    = \frac{1}{|A|}\sum_{j \in A} (u_1)_j \cdot E[f \mid f \in I_k]
    = \underbrace{\biggl(\frac{1}{|A|}\sum_{j \in A}
      (u_1)_j\biggr)}_{\bar{u}_{1,A}}
      \cdot\; E[f \mid f \in I_k],
  \]
  where the factor $E[f \mid f \in I_k]$ does not depend on~$j$ and
  can be pulled out of the sum.  The same identity holds for block~$B$:
  $m_{k,B} = \bar{u}_{1,B} \cdot E[f \mid f \in I_k]$.

  \emph{Step~4: Difference of block means between two intervals.}
  Steps~1--3 established that each block mean is a product of a loading
  and a filter term.  We now take the difference between two intervals
  to see how dissociation arises.  For any two intervals $I_k$
  and $I_\ell$,
  \begin{align}
    m_{k,A} - m_{\ell,A}
    &= \bar{u}_{1,A} \cdot E[f \mid f \in I_k]
       - \bar{u}_{1,A} \cdot E[f \mid f \in I_\ell] \notag \\
    &= \bar{u}_{1,A} \cdot
       \bigl(E[f \mid f \in I_k] - E[f \mid f \in I_\ell]\bigr).
       \label{eq:block-diff}
  \end{align}

  \emph{Step~5: Monotonicity of truncated Gaussian means.}
  Equation~\eqref{eq:block-diff} shows that block means differ between
  intervals whenever the filter terms $E[f \mid f \in I_k]$ differ.
  We now show that they always do.
  The filter value satisfies $f \sim N(0, \lambda_1)$.  If $I_k$ lies
  entirely to the left of~$I_\ell$ on the real line, then conditioning
  $f$ on the left interval pulls the mean down and conditioning on the
  right interval pushes it up.  More precisely, for any Gaussian random
  variable and any two disjoint intervals $I_k < I_\ell$ (meaning every
  point of~$I_k$ is less than every point of~$I_\ell$), the truncated
  means satisfy
  \[
    E[f \mid f \in I_k] \;<\; E[f \mid f \in I_\ell].
  \]
  This follows from the fact that the Gaussian density is positive
  everywhere on~$\R$: conditioning on a left interval concentrates
  probability mass on smaller values, strictly decreasing the mean
  relative to conditioning on a right interval.

  \emph{Step~6: Lower bound via the extreme intervals.}
  We have all the ingredients: block means factor as loading times
  filter term (Step~3), and filter terms are strictly monotone
  (Step~5).  It remains to assemble these into a bound
  on~$D_{\mathrm{pop}}$.
  The dissociation $D_{\mathrm{pop}}$ takes a maximum over all pairs
  $k < \ell$ and both blocks.  We obtain a lower bound by restricting
  to the single pair $(I_{\min}, I_{\max})$---the leftmost and
  rightmost intervals with positive probability---and to block~$A$
  alone:
  \[
    D_{\mathrm{pop}}
    = \max_{k < \ell}\,\max\!\bigl(|m_{k,A} - m_{\ell,A}|,\;
      |m_{k,B} - m_{\ell,B}|\bigr)
    \;\ge\; |m_{\min,A} - m_{\max,A}|.
  \]
  Applying~\eqref{eq:block-diff} with $k = \min$ and $\ell = \max$
  and taking absolute values:
  \[
    |m_{\min,A} - m_{\max,A}|
    = |\bar{u}_{1,A}| \cdot
      \bigl|E[f \mid f \in I_{\max}] - E[f \mid f \in I_{\min}]\bigr|.
  \]
  The same argument with block~$B$ gives
  $D_{\mathrm{pop}} \ge |\bar{u}_{1,B}| \cdot
  |E[f \mid f \in I_{\max}] - E[f \mid f \in I_{\min}]|$.
  Since $D_{\mathrm{pop}}$ is at least as large as both the block-$A$
  and block-$B$ contributions, it is at least as large as the larger
  of the two:
  \[
    D_{\mathrm{pop}}
    \;\ge\; \max\!\bigl(|\bar{u}_{1,A}|,\; |\bar{u}_{1,B}|\bigr)
    \cdot \bigl|E[f \mid f \in I_{\max}]
    - E[f \mid f \in I_{\min}]\bigr|.
  \]

  \emph{Step~7: Strict positivity.}
  The first factor, $\max(|\bar{u}_{1,A}|, |\bar{u}_{1,B}|)$, is
  positive by the loading condition assumed in the theorem.  The second
  factor, $|E[f \mid f \in I_{\max}] - E[f \mid f \in I_{\min}]|$, is
  positive by the monotonicity established in Step~5, since $I_{\min}$
  and $I_{\max}$ are distinct intervals with $I_{\min} < I_{\max}$.
  Therefore $D_{\mathrm{pop}} > 0$.
\end{proof}
\restoretheoremnum

\subsection{Genericity of the loading condition}\label{supp:genericity}

\begin{proposition}\label{prop:genericity}
  For any partition of $\{1, \ldots, p\}$ into nonempty blocks $A$ and
  $B$, the set of positive definite matrices $\Sigma$ for which
  $\bar{u}_{1,A} = \bar{u}_{1,B} = 0$ has Lebesgue measure zero.
\end{proposition}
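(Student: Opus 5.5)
The plan is to show that the bad set lies inside the zero set of a nonzero real-analytic (in fact polynomial-type) function on the open cone $\mathcal{S}_{++}$ of positive definite matrices, viewed as an open subset of $\R^{p(p+1)/2}$. Such a zero set has Lebesgue measure zero. Since $\bar{u}_{1,A} = \bar{u}_{1,B} = 0$ forces $\bar{u}_{1,A}=0$, it suffices to show that $\{\Sigma : \bar{u}_{1,A}(\Sigma) = 0\}$ is null. Let $a = \mathbf{1}_A/|A|$. Then $\bar{u}_{1,A} = a^\top u_1$, and the sign ambiguity of $u_1$ is irrelevant because we only test whether $a^\top u_1$ vanishes.

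\emph{Step 1: work on the simple-top-eigenvalue set.} The set $\{\Sigma \in \mathcal{S}_{++} : \lambda_1 = \lambda_2\}$ has measure zero. It is contained in the zero set of the discriminant of the characteristic polynomial, which is a nonzero polynomial in the entries of $\Sigma$. Its complement $U$ is open. On $U$, the spectral projector $P_1(\Sigma) = u_1 u_1^\top$ is real-analytic in $\Sigma$, for example via the Riesz contour integral $P_1 = \frac{1}{2\pi i}\oint (zI-\Sigma)^{-1}\,dz$ around the isolated eigenvalue $\lambda_1$. Hence $g(\Sigma) = a^\top P_1(\Sigma)\, a = (a^\top u_1)^2$ is real-analytic on $U$, and it vanishes exactly when $\bar{u}_{1,A}=0$.

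\emph{Step 2: $g$ is not identically zero on any connected component of $U$.} The set $U$ is in fact connected: $\mathcal{S}_{++}$ is convex, and removing a proper real-algebraic subset of codimension at least one does not disconnect it. I expect this connectivity claim to be the main obstacle. To avoid it, I would instead show directly that $g \not\equiv 0$ near every point. Fix $\Sigma_0 \in U$ with eigenpairs $(\lambda_i, u_i)$. Consider the family $\Sigma_t = \Sigma_0 + t\, a a^\top$ for $t>0$. Then $\Sigma_t$ is positive definite, and as $t\to\infty$ its top eigenvector tends to $a/\|a\|$, so $g(\Sigma_t)\to\|a\|^2>0$. This path may leave the component, however, so the cleaner route is to argue inside a small ball around $\Sigma_0$. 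If $a^\top u_1 = 0$, perturb to $\Sigma_0 + s(u_1 a^\top + a u_1^\top)$. First-order perturbation theory gives $\dot u_1 = \sum_{i\ge2}\frac{u_i^\top(u_1a^\top+au_1^\top)u_1}{\lambda_1-\lambda_i}u_i = \sum_{i\ge 2}\frac{a^\top u_i}{\lambda_1-\lambda_i}u_i$. Therefore $\frac{d}{ds}a^\top u_1 = \sum_{i\ge2}\frac{(a^\top u_i)^2}{\lambda_1-\lambda_i} > 0$, because $a\neq 0$ and $a\perp u_1$ force some $a^\top u_i\neq 0$. So $g$ is not identically zero on any open subset of $U$.

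\emph{Step 3: conclude.} A real-analytic function on a connected open set that is not identically zero has a zero set of Lebesgue measure zero. This is the standard result, proved by induction on dimension using Fubini. Apply it on each connected component of $U$; there are at most countably many, and by Step 2 $g\not\equiv 0$ on each. Then $\{g=0\}\cap U$ is null. Adding the null discriminant set shows that the full bad set is contained in a null set. This proves Proposition~\ref{prop:genericity}. The result is in fact stronger than stated, since even the single condition $\bar{u}_{1,A}=0$ is non-generic.
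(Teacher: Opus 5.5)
Your proof is correct, but it takes a genuinely different route from the paper.

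\emph{The paper's route.} It parametrizes the simple-spectrum part of $\mathrm{SPD}_p$ as $\Sigma = Q\,\mathrm{diag}(\lambda_1,\ldots,\lambda_p)\,Q^\top$. It then invokes the decomposition of Lebesgue measure on symmetric matrices into a Vandermonde-weighted eigenvalue part times Haar measure on $O(p)$, a cited random-matrix fact. From this it observes that, for fixed eigenvalues, the first column of a Haar-distributed $Q$ is uniform on $S^{p-1}$. It concludes by integrating over eigenvalues, since $\{v \in S^{p-1} : \langle v,a\rangle = \langle v,b\rangle = 0\}$ is null on the sphere.

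\emph{Your route.} You treat $g(\Sigma) = a^\top P_1(\Sigma)\,a = (a^\top u_1)^2$ as a real-analytic function on the open set $U$ where $\lambda_1 > \lambda_2$. The perturbation computation along $\Sigma_0 + s(u_1 a^\top + a u_1^\top)$ gives $\frac{d}{ds}\,a^\top u_1 = \sum_{i\ge 2} (a^\top u_i)^2/(\lambda_1 - \lambda_i) > 0$ whenever $a^\top u_1 = 0$. This shows $g$ vanishes identically on no open set, and the null-zero-set theorem for analytic functions, applied to each of the countably many components, finishes.

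\emph{Comparison.} The paper's argument exploits the rotational invariance built into Lebesgue measure and gives the intuitive picture that $u_1$ is uniformly distributed for fixed spectrum. However, it leans on a nontrivial Jacobian formula. Yours needs only analyticity of the Riesz projector and a first-order eigenvector computation, and it states explicitly the stronger conclusion that $\bar{u}_{1,A} = 0$ alone is non-generic. The paper's method would also give this, since a single great subsphere is already null.

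\emph{Two small remarks.}
\begin{enumerate}
\item Your aside that removing a real-algebraic set of codimension at least one cannot disconnect a convex set is false as stated; a hyperplane does disconnect. $U$ is in fact connected because the eigenvalue-crossing locus has codimension two, but you rightly do not rely on this, so the aside should simply be deleted.
\item Your derivative computation already shows that a local analytic branch of $h = a^\top u_1$ has nonvanishing gradient wherever it vanishes. So the bad set is locally a smooth hypersurface in $U$, and the implicit function theorem gives nullity directly, with no appeal to the analytic zero-set theorem or to connected components.
\end{enumerate}
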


\begin{proof}
  The proof strategy is as follows.  We want to show that for ``almost
  every'' positive definite matrix~$\Sigma$, the leading eigenvector
  $u_1$ does not simultaneously average to zero on both blocks $A$
  and~$B$.  We do this in five steps.  First, we identify the ``bad'' directions on
  the unit sphere where both averages vanish (Steps~1--2) and show that
  these directions form a negligibly small set (Step~2).  Next, we remove
  a negligible set of matrices with repeated eigenvalues so that $u_1$ is
  well-defined (Step~3).  Finally, we use the spectral decomposition to
  show that the leading eigenvector is essentially uniformly distributed
  on the sphere, so it avoids the bad set with probability one
  (Steps~4--5).

  \emph{Step~1: Rewriting the block averages as inner products.}
  Let $\mathbf{1}_A \in \R^p$ be the indicator vector of block~$A$:
  its $j$th entry is~1 if $j \in A$ and~0 otherwise; define
  $\mathbf{1}_B$ similarly for block~$B$.  Set
  $a = \frac{1}{|A|}\mathbf{1}_A$ and
  $b = \frac{1}{|B|}\mathbf{1}_B$.  Then the block averages of the
  leading eigenvector can be written as inner products:
  \[
    \bar{u}_{1,A}
    = \frac{1}{|A|}\sum_{j \in A} (u_1)_j
    = \sum_{j=1}^p a_j\, (u_1)_j
    = \langle u_1, a \rangle,
  \]
  and similarly $\bar{u}_{1,B} = \langle u_1, b \rangle$.

  The vectors $a$ and $b$ are linearly independent: $a$ has positive
  entries exactly on~$A$ and zeros on~$B$, while $b$ has positive
  entries exactly on~$B$ and zeros on~$A$.  Since $A$ and $B$ are
  nonempty and disjoint, neither vector is a scalar multiple of the
  other.

  \emph{Step~2: The bad set on the sphere is small.}
  Let $S^{p-1} = \{v \in \R^p : \|v\| = 1\}$ denote the unit sphere
  in~$\R^p$.  Since $u_1$ is a unit eigenvector, it lies on~$S^{p-1}$.
  Define the ``bad set''
  \[
    S_{\mathrm{bad}}
    = \{ v \in S^{p-1} : \langle v, a \rangle = 0
    \text{ and } \langle v, b \rangle = 0 \}.
  \]
  Each condition $\langle v, a \rangle = 0$ restricts $v$ to a
  hyperplane through the origin (the set of vectors orthogonal to~$a$).
  Intersecting $S^{p-1}$ with one hyperplane gives a
  $(p-2)$-dimensional subsphere.  Imposing the second independent
  constraint $\langle v, b \rangle = 0$ cuts the dimension by one more.
  Concretely:

  If $p = 2$, the two independent constraints $\langle v, a \rangle = 0$
  and $\langle v, b \rangle = 0$ leave only $v = 0$, which is not on
  the unit sphere, so $S_{\mathrm{bad}} = \varnothing$.

  If $p \ge 3$, $S_{\mathrm{bad}}$ is the intersection of~$S^{p-1}$
  with a $(p - 2)$-dimensional linear subspace, yielding a smooth
  submanifold of dimension $p - 3$.  A $(p-3)$-dimensional submanifold
  of the $(p-1)$-dimensional sphere has surface area zero---just as a
  curve (1-dimensional) has zero area on a surface (2-dimensional).

  In both cases, $S_{\mathrm{bad}}$ has measure zero under the uniform
  (surface area) measure on~$S^{p-1}$.

  \emph{Step~3: Removing matrices with repeated eigenvalues.}
  When $\Sigma$ has a repeated largest eigenvalue
  ($\lambda_1 = \lambda_2$), the leading eigenvector $u_1$ is not
  uniquely defined (any unit vector in the eigenspace would work), so
  we cannot meaningfully ask whether $u_1$ avoids $S_{\mathrm{bad}}$.
  We handle this by showing that such matrices are negligibly rare.

  The eigenvalues of a symmetric matrix are the roots of its
  characteristic polynomial $\det(\Sigma - \lambda I)$.  Two
  eigenvalues coincide exactly when this polynomial has a repeated
  root, which happens when its discriminant---a fixed nonzero polynomial
  in the entries of~$\Sigma$---equals zero.  The zero set of a nonzero
  polynomial in~$\R^N$ has Lebesgue measure zero (this is a standard
  fact from real algebraic geometry).  Therefore the set of positive
  definite matrices with any repeated eigenvalue has Lebesgue measure
  zero.

  It suffices to prove the proposition on the remaining matrices, which
  we call the simple-spectrum set:
  $\mathrm{SPD}_p^\circ = \{ \Sigma \in \mathrm{SPD}_p :
  \lambda_1(\Sigma) > \lambda_2(\Sigma) > \cdots > \lambda_p(\Sigma)
  > 0 \}$.  On this set, $u_1$ is uniquely defined up to sign, which
  does not affect $\bar{u}_{1,A}$ or~$\bar{u}_{1,B}$ being zero.

  \emph{Step~4: Spectral-decomposition parametrization.}
  Every matrix in $\mathrm{SPD}_p^\circ$ can be written as
  $\Sigma = Q\, \mathrm{diag}(\lambda_1, \ldots, \lambda_p)\, Q^\top$,
  where $Q$ is a $p \times p$ orthogonal matrix (i.e., $Q^\top Q = I$)
  and $\lambda_1 > \cdots > \lambda_p > 0$ are the ordered eigenvalues.
  This is the spectral theorem from linear algebra.  The columns of~$Q$
  are the eigenvectors: the first column $q_1$ is precisely the leading
  eigenvector~$u_1$.

  The idea of the remaining argument is that specifying~$\Sigma$ is
  equivalent to specifying its eigenvalues and its eigenvectors
  separately.  If we can show that the eigenvector part is ``uniformly
  spread'' over all directions, then hitting the measure-zero bad set
  $S_{\mathrm{bad}}$ from Step~2 has probability zero.  Making this
  precise requires a result about how Lebesgue measure on symmetric
  matrices decomposes into an eigenvalue part and an eigenvector part.
  This result is standard in random matrix theory; we state it here
  for completeness and provide a reference.

  \begin{fact}\label{fact:spectral-measure}
    Restricted to the simple-spectrum set $\mathrm{SPD}_p^\circ$,
    Lebesgue measure on the space of $p \times p$ symmetric matrices
    decomposes as
    \[
      d\Sigma \;\propto\;
      \prod_{i < j} |\lambda_i - \lambda_j|\;
      d\lambda_1 \cdots d\lambda_p \; dQ,
    \]
    where $d\lambda_1 \cdots d\lambda_p$ is Lebesgue measure on the
    eigenvalues and $dQ$ is Haar measure on the orthogonal
    group~$O(p)$.
  \end{fact}

  We briefly unpack the two components of this decomposition for
  readers unfamiliar with them:

  \emph{Haar measure $dQ$ on~$O(p)$} is the unique probability measure
  on the orthogonal group that is invariant under orthogonal
  transformations: if $Q$ is drawn from Haar measure and $R$ is any
  fixed orthogonal matrix, then $RQ$ has the same distribution
  as~$Q$.  It is the natural analog of the uniform distribution, but
  for the orthogonal group rather than for an interval.

  \emph{The Vandermonde factor} $\prod_{i<j}|\lambda_i - \lambda_j|$
  is the absolute value of the determinant of the Jacobian matrix for
  the change of variables from matrix entries to eigenvalues and
  eigenvectors.  Its exact form is not needed for our argument; what
  matters is that it is a scalar that depends only on the eigenvalues,
  not on~$Q$.

  Because the Vandermonde factor depends only on the eigenvalues, the
  decomposition in Fact~\ref{fact:spectral-measure} tells us that the
  measure factorizes into an eigenvalue-dependent density times Haar
  measure on~$O(p)$.  For a proof of this decomposition, see
  \cite[Theorem~2.6.1]{AndersonGuionnetZeitouni2010}.

  \emph{Step~5: The leading eigenvector is uniform on the sphere.}
  Under Haar measure on~$O(p)$, the first column~$q_1$ of a random
  orthogonal matrix is uniformly distributed on the unit
  sphere~$S^{p-1}$.  To see why, let $R$ be any fixed rotation.  By
  the invariance property of Haar measure, $RQ$ has the same
  distribution as~$Q$, so the first column of~$RQ$---which is
  $Rq_1$---has the same distribution as~$q_1$.  The only distribution
  on~$S^{p-1}$ with this property (invariance under all rotations) is
  the uniform distribution.

  We can now complete the argument.  Fix any choice of eigenvalues
  $\lambda_1 > \cdots > \lambda_p > 0$.  The leading eigenvector is
  $u_1 = q_1$, and $q_1$ is uniform on~$S^{p-1}$.  Since
  $S_{\mathrm{bad}}$ has surface-area measure zero on~$S^{p-1}$
  (Step~2), the probability that $q_1$ lands in $S_{\mathrm{bad}}$ is
  zero:
  \[
    \Pr(q_1 \in S_{\mathrm{bad}}) = 0.
  \]
  This holds for every choice of eigenvalues.  Because the measure
  factorizes into eigenvalue and eigenvector parts
  (Fact~\ref{fact:spectral-measure}), integrating over all eigenvalues
  preserves this conclusion: the full set of simple-spectrum positive
  definite matrices with $\bar{u}_{1,A} = \bar{u}_{1,B} = 0$ has
  Lebesgue measure zero.  Adding back the measure-zero
  repeated-eigenvalue set from Step~3 does not change this conclusion.
\end{proof}

\subsection{Proof of Theorem~3.2 (Label-permutation null)}

\settheoremnum{3.2}
\begin{theorem}
  Suppose the number of communities~$K$ is fixed and the community
  sizes satisfy $n_k / n \to \pi_k \in (0,1)$ as $n \to \infty$ for
  $k = 1, \ldots, K$.  Conditional on any observed data for which the
  finite-population variances $S_A^2$ and $S_B^2$ remain bounded, under
  a uniformly random reassignment of data points to communities,
  preserving the sizes~$n_k$, the permuted dissociation satisfies
  converges to zero in probability at rate $n^{-1/2}$.
  That is, as the sample size grows, the probability that
  $D^{\mathrm{perm}}$ exceeds any fixed threshold vanishes.
\end{theorem}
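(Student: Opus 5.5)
We first make the objects precise. Condition on the observed data $x_1,\ldots,x_n$ and define the per-point block averages $a_i = \frac{1}{|A|}\sum_{j\in A} x_{ij}$ and $b_i = \frac{1}{|B|}\sum_{j\in B} x_{ij}$. The community block mean is then just the sample mean of the $a_i$ over its members, $\mu_{k,A} = \frac{1}{n_k}\sum_{i\in C_k} a_i$, and similarly for $B$. Let $\bar a = \frac1n\sum_i a_i$ and $S_A^2 = \frac{1}{n-1}\sum_i (a_i-\bar a)^2$ be the finite-population variance, with $S_B^2$ defined analogously. Under a uniformly random size-preserving relabeling, each permuted community is a simple random sample without replacement of size $n_k$ from the finite population $\{a_i\}$. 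The plan is to control each $\mu^{\mathrm{perm}}_{k,A}-\bar a$ by a second-moment bound, then assemble the bounds with a union bound over the finitely many pairs and blocks.

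\emph{Step 1 (exact moments).} For sampling without replacement, the classical finite-population formulas give
\[
E[\mu^{\mathrm{perm}}_{k,A}] = \bar a, \qquad
\operatorname{Var}(\mu^{\mathrm{perm}}_{k,A}) = \frac{S_A^2}{n_k}\Bigl(1-\frac{n_k}{n}\Bigr).
\]
These follow from exchangeability: the pairwise covariance of two draws is $-S_A^2/n$. The same identities hold for block $B$.

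\emph{Step 2 (Chebyshev).} Since $n_k/n\to\pi_k>0$, we have $n_k\ge c n$ for large $n$, so the variance in Step 1 is at most $S_A^2/(cn)$. By Chebyshev's inequality, $\Pr\bigl(|\mu^{\mathrm{perm}}_{k,A}-\bar a|>t\bigr)\le S_A^2/(c n t^2)$. Taking $t = M n^{-1/2}$ shows that $\sqrt n\,(\mu^{\mathrm{perm}}_{k,A}-\bar a)$ is tight, provided $\sup_n S_A^2<\infty$. This is exactly the stated boundedness hypothesis.

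\emph{Step 3 (assembly).} By the triangle inequality, $|\mu^{\mathrm{perm}}_{k,A}-\mu^{\mathrm{perm}}_{\ell,A}|\le |\mu^{\mathrm{perm}}_{k,A}-\bar a|+|\mu^{\mathrm{perm}}_{\ell,A}-\bar a|$. Hence
\[
D^{\mathrm{perm}}\le 2\max_{k}\max\bigl(|\mu^{\mathrm{perm}}_{k,A}-\bar a|,\,|\mu^{\mathrm{perm}}_{k,B}-\bar a'|\bigr),
\]
where $\bar a'$ is the overall mean of the $b_i$. A union bound over the $2K$ terms, with $K$ fixed, gives
\[
\Pr\bigl(D^{\mathrm{perm}}>2Mn^{-1/2}\bigr)\le \frac{2K\,\sup_n\max(S_A^2,S_B^2)}{cM^2}.
\]
This can be made arbitrarily small by choosing $M$ large, so $D^{\mathrm{perm}}=O_p(n^{-1/2})$. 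In particular, for any fixed $\varepsilon>0$, $\Pr(D^{\mathrm{perm}}>\varepsilon)\to 0$.

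I expect the main obstacle to be bookkeeping rather than depth: the data change with $n$, so we have a triangular array of finite populations. The argument must therefore use only the uniform bound on $S_A^2$ and $S_B^2$, and $n_k$ bounded below by a fixed fraction of $n$. Both are supplied by the hypotheses, and no central limit theorem is needed. We would also remark that $\pi_k<1$ is not actually used, since the factor $1-n_k/n$ only helps.
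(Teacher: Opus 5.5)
Your proposal is correct and follows essentially the same route as the paper's proof: reduce to per-point block averages, treat each permuted community as a simple random sample without replacement with variance $(1/n_k - 1/n)S_A^2$, apply Chebyshev to get $O_p(n^{-1/2})$ for each community mean, and pass to $D^{\mathrm{perm}}$ via the triangle inequality over finitely many pairs and blocks. Your explicit union bound $\Pr(D^{\mathrm{perm}} > 2Mn^{-1/2}) \le 2K\sup_n\max(S_A^2,S_B^2)/(cM^2)$ is a quantitative version of the paper's remark that a maximum of finitely many $O_p(n^{-1/2})$ terms is itself $O_p(n^{-1/2})$.
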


\begin{proof}
  The argument proceeds in four steps.  First, we reduce the
  multivariate problem to a univariate one by introducing block
  averages.  Second, we observe that a random permutation of community
  labels turns each community into a simple random sample, and we
  derive the variance of the resulting community means using a
  finite-population sampling formula.  Third, we use this variance to
  show that every permuted community mean concentrates around the
  overall (grand) mean at rate $n^{-1/2}$.  Finally, we show that the
  dissociation---which measures differences between community
  means---inherits this rate.

  \emph{Step~1: Block averages.}
  For each data point $i = 1, \ldots, n$, define the block-$A$ average
  $Y_{i,A} = \frac{1}{|A|}\sum_{j \in A} X_{ij}$ and similarly
  $Y_{i,B} = \frac{1}{|B|}\sum_{j \in B} X_{ij}$.  These reduce the
  $p$-dimensional data to two scalar summaries per data point: the
  average of the features in block~$A$ and the average of the features
  in block~$B$.  The dissociation metric $D$ depends on the data only
  through these block averages (by definition), so it suffices to work
  with $Y_{i,A}$ and $Y_{i,B}$.

  Also define the grand block mean, which is the average of $Y_{i,A}$
  over all $n$ data points:
  \[
    \bar{Y}_{\bullet,A}
    = \frac{1}{n}\sum_{i=1}^n Y_{i,A}.
  \]

  \emph{Step~2: Permutation as simple random sampling.}
  Under the label-permutation null, we randomly reassign data points
  to communities while keeping the community sizes $n_1, \ldots, n_K$
  fixed.  Concretely, this means we draw a uniformly random partition
  of $\{1, \ldots, n\}$ into groups of sizes $n_1, \ldots, n_K$.

  From the perspective of any single community~$k$, its $n_k$ members
  are a simple random sample drawn without replacement from the finite
  population of all $n$ data points.  (This is exactly what ``uniformly
  random partition'' means: every subset of size~$n_k$ is equally
  likely to be assigned to community~$k$.)

  The mean of $Y_{i,A}$ within the permuted community~$k$ is
  \[
    \bar{Y}_{k,A}^{\mathrm{perm}}
    = \frac{1}{n_k}\sum_{i \in C_k^{\mathrm{perm}}} Y_{i,A},
  \]
  where $C_k^{\mathrm{perm}}$ denotes the set of data points assigned
  to community~$k$ under the permutation.  Because this is the mean of
  a simple random sample without replacement from a finite population,
  its variance (conditional on the observed data~$X$, which determines
  the values $Y_{1,A}, \ldots, Y_{n,A}$) is given by the
  finite-population sampling formula:
  \begin{equation}\label{eq:fpc-variance}
    \operatorname{Var}\!\bigl(\bar{Y}_{k,A}^{\mathrm{perm}}
    \mid X\bigr)
    = \frac{S_A^2}{n_k}\,
      \Bigl(1 - \frac{n_k}{n}\Bigr)
    = \Bigl(\frac{1}{n_k} - \frac{1}{n}\Bigr) S_A^2,
  \end{equation}
  where
  \[
    S_A^2
    = \frac{1}{n - 1}\sum_{i=1}^n
      \bigl(Y_{i,A} - \bar{Y}_{\bullet,A}\bigr)^2
  \]
  is the sample variance of the $Y_{i,A}$ values across all $n$ data
  points.  The factor $(1 - n_k/n)$ is the finite-population correction:
  sampling without replacement from a finite population produces less
  variability than sampling with replacement, because drawing a value
  removes it from the pool and makes future draws slightly more
  predictable.  When $n_k$ is small relative to~$n$ (so that each
  community is a small fraction of the total), this correction is close
  to~1 and the formula resembles the familiar $S_A^2 / n_k$.

  \emph{Step~3: Each permuted community mean concentrates around the
  grand mean.}
  We now use~\eqref{eq:fpc-variance} to show that
  $\bar{Y}_{k,A}^{\mathrm{perm}}$ is close to the grand mean
  $\bar{Y}_{\bullet,A}$.

  First, note that
  $E[\bar{Y}_{k,A}^{\mathrm{perm}} \mid X] = \bar{Y}_{\bullet,A}$:
  the expected value of a simple random sample mean equals the
  population mean.  So the difference
  $\bar{Y}_{k,A}^{\mathrm{perm}} - \bar{Y}_{\bullet,A}$ has
  conditional mean zero and conditional variance
  $(1/n_k - 1/n)\, S_A^2$.

  To bound the size of this difference, we use the assumption that
  $n_k / n \to \pi_k \in (0, 1)$ as $n \to \infty$.  This means $n_k$ grows
  proportionally to~$n$, so $(1/n_k - 1/n)$ is of order $1/n$.
  More precisely, since $n_k/n \to \pi_k$,
  \[
    n\Bigl(\frac{1}{n_k} - \frac{1}{n}\Bigr)
    = \frac{n}{n_k} - 1
    \;\to\; \frac{1}{\pi_k} - 1,
  \]
  so $(1/n_k - 1/n) = O(1/n)$.
  Therefore
  $\operatorname{Var}(\bar{Y}_{k,A}^{\mathrm{perm}} \mid X)$ is of
  order $S_A^2 / n$.  Because we condition on the observed data, $S_A^2$
  is a fixed constant (not random).  The theorem assumes
  $S_A^2, S_B^2 \le C_0$ for some bound~$C_0$; combined with the
  proportional-size assumption, this gives
  $\operatorname{Var}(\bar{Y}_{k,A}^{\mathrm{perm}} \mid X) \le
  C_1 / n$ for a constant $C_1$ depending only on $C_0$ and
  the~$\pi_k$.

  By Chebyshev's inequality, a random variable with mean zero and
  variance $\sigma^2$ satisfies
  $\Pr(|Z| > t) \le \sigma^2 / t^2$.  Applying this conditionally
  (given~$X$) with
  $\sigma^2 \le C_1/n$
  and $t = C\, n^{-1/2}$ for any constant $C > 0$ gives
  \[
    \Pr\!\bigl(|\bar{Y}_{k,A}^{\mathrm{perm}}
    - \bar{Y}_{\bullet,A}| > C\, n^{-1/2}
    \;\big|\; X\bigr)
    \;\le\; \frac{C_1 / n}{C^2 / n}
    \;=\; \frac{C_1}{C^2},
  \]
  which can be made arbitrarily small by choosing $C$ large.  Therefore
  \[
    \bar{Y}_{k,A}^{\mathrm{perm}} - \bar{Y}_{\bullet,A}
    = O_p(n^{-1/2}).
  \]
  The same argument applies to block~$B$:
  $\bar{Y}_{k,B}^{\mathrm{perm}} - \bar{Y}_{\bullet,B}
  = O_p(n^{-1/2})$.

  \emph{Step~4: From community means to dissociation.}
  The permuted dissociation is
  \[
    D^{\mathrm{perm}}
    = \max_{k < \ell}\,
      \max\!\bigl(
        |\bar{Y}_{k,A}^{\mathrm{perm}}
         - \bar{Y}_{\ell,A}^{\mathrm{perm}}|,\;
        |\bar{Y}_{k,B}^{\mathrm{perm}}
         - \bar{Y}_{\ell,B}^{\mathrm{perm}}|
      \bigr).
  \]
  Each difference of permuted community means can be written as
  \[
    \bar{Y}_{k,A}^{\mathrm{perm}} - \bar{Y}_{\ell,A}^{\mathrm{perm}}
    = \bigl(\bar{Y}_{k,A}^{\mathrm{perm}}
      - \bar{Y}_{\bullet,A}\bigr)
    - \bigl(\bar{Y}_{\ell,A}^{\mathrm{perm}}
      - \bar{Y}_{\bullet,A}\bigr).
  \]
  Each of the two terms in parentheses is $O_p(n^{-1/2})$ by Step~3,
  so their difference is also $O_p(n^{-1/2})$.  (The sum or difference
  of two quantities that are each $O_p(n^{-1/2})$ is itself
  $O_p(n^{-1/2})$, by the triangle inequality.)

  The dissociation $D^{\mathrm{perm}}$ takes a maximum over
  $\binom{K}{2}$ community pairs and over both blocks.  Since $K$ is
  fixed (it does not grow with~$n$), this is a maximum of a fixed,
  finite number of $O_p(n^{-1/2})$ terms.  The maximum of finitely many
  terms that are each $O_p(n^{-1/2})$ is itself $O_p(n^{-1/2})$,
  because the largest of (say) 10 small quantities is still small.
  Therefore
  \[
    D^{\mathrm{perm}} = O_p(n^{-1/2}).
  \]
  As $n \to \infty$, $D^{\mathrm{perm}} \to 0$ in probability.  The
  population target of the label-permutation null is therefore zero:
  every permuted community mean converges to the same grand mean, so
  the differences between communities vanish.
\end{proof}
\restoretheoremnum

\bibliographystyle{siamplain}
\bibliography{references}

@book{AndersonGuionnetZeitouni2010,
  author    = {Anderson, Greg W. and Guionnet, Alice and Zeitouni, Ofer},
  title     = {An Introduction to Random Matrices},
  publisher = {Cambridge University Press},
  year      = {2010}
}

@book{CoverThomas2006,
  author    = {Cover, Thomas M. and Thomas, Joy A.},
  title     = {Elements of Information Theory},
  edition   = {2nd},
  publisher = {Wiley-Interscience},
  year      = {2006}
}

@article{LedoitWolf2004,
  author  = {Ledoit, Olivier and Wolf, Michael},
  title   = {A well-conditioned estimator for large-dimensional covariance matrices},
  journal = {Journal of Multivariate Analysis},
  volume  = {88},
  number  = {2},
  pages   = {365--411},
  year    = {2004}
}

@article{Cancer2015lgg,
  author  = {{Cancer Genome Atlas Research Network}},
  title   = {Comprehensive, integrative genomic analysis of diffuse lower-grade gliomas},
  journal = {New England Journal of Medicine},
  volume  = {372},
  number  = {26},
  pages   = {2481--2498},
  year    = {2015}
}

@book{RabadanCamara2020,
  author    = {Rabadan, Raul and Camara, Pablo G.},
  title     = {Topological Data Analysis for Genomics and Evolution: Topology in Biology},
  publisher = {Cambridge University Press},
  year      = {2020}
}

@article{vanDeVijver2002,
  author  = {van de Vijver, Marc J. and He, Yudong D. and van't Veer, Laura J. and Dai, Hongyue and Hart, Augustinus A. M. and Voskuil, Dorien W. and Schreiber, George J. and Peterse, Johannes L. and Roberts, Chris and Marton, Matthew J. and others},
  title   = {A gene-expression signature as a predictor of survival in breast cancer},
  journal = {New England Journal of Medicine},
  volume  = {347},
  number  = {25},
  pages   = {1999--2009},
  year    = {2002}
}

@article{Blondel2008,
  author  = {Blondel, Vincent D. and Guillaume, Jean-Loup and Lambiotte, Renaud and Lefebvre, Etienne},
  title   = {Fast unfolding of communities in large networks},
  journal = {Journal of Statistical Mechanics: Theory and Experiment},
  volume  = {2008},
  number  = {10},
  pages   = {P10008},
  year    = {2008}
}

@inproceedings{Singh2007,
  author    = {Singh, Gurjeet and M{\'e}moli, Facundo and Carlsson, Gunnar},
  title     = {Topological Methods for the Analysis of High Dimensional Data Sets and {3D} Object Recognition},
  booktitle = {Eurographics Symposium on Point-Based Graphics},
  year      = {2007}
}

@article{CarriereOudot2018,
  author  = {Carri{\`e}re, Mathieu and Oudot, Steve},
  title   = {Structure and Stability of the One-Dimensional {Mapper}},
  journal = {Foundations of Computational Mathematics},
  volume  = {18},
  pages   = {1333--1396},
  year    = {2018}
}

@article{Brown2021,
  author  = {Brown, Adam and Bobrowski, Omer and Munch, Elizabeth and Wang, Bei},
  title   = {Probabilistic Convergence and Stability of Random {Mapper} Graphs},
  journal = {Journal of Applied and Computational Topology},
  volume  = {5},
  pages   = {45--96},
  year    = {2021}
}

@article{CarriereMichelOudot2018,
  author  = {Carri{\`e}re, Mathieu and Michel, Bertrand and Oudot, Steve},
  title   = {Statistical Analysis and Parameter Selection for {Mapper}},
  journal = {Journal of Machine Learning Research},
  volume  = {19},
  number  = {12},
  pages   = {1--40},
  year    = {2018}
}

@inproceedings{Dey2016,
  author    = {Dey, Tamal K. and M{\'e}moli, Facundo and Wang, Yusu},
  title     = {Multiscale {Mapper}: Topological Summarization via Codomain Covers},
  booktitle = {Proceedings of the 27th ACM-SIAM Symposium on Discrete Algorithms (SODA)},
  year      = {2016}
}

@inproceedings{Dey2017,
  author    = {Dey, Tamal K. and M{\'e}moli, Facundo and Wang, Yusu},
  title     = {Topological Analysis of Nerves, {Reeb} Spaces, {Mappers}, and Multiscale {Mappers}},
  booktitle = {33rd International Symposium on Computational Geometry (SoCG)},
  series    = {LIPIcs},
  volume    = {77},
  pages     = {36:1--36:16},
  year      = {2017}
}

@article{Alvarado2025,
  author  = {Alvarado, Enrique and Belton, Robin and Fischer, Emily and Lee, Kang-Ju and Palande, Sourabh and Percival, Sarah and Purvine, Emilie},
  title   = {{G-Mapper}: Learning a Cover in the {Mapper} Construction},
  journal = {SIAM Journal on Mathematics of Data Science},
  volume  = {7},
  number  = {1},
  pages   = {353--385},
  year    = {2025},
  doi     = {10.1137/24M1641312}
}

@article{Tao2025,
  author  = {Tao, Yuyang and Ge, Shufei},
  title   = {A Distribution-Guided {Mapper} Algorithm},
  journal = {BMC Bioinformatics},
  volume  = {26},
  pages   = {73},
  year    = {2025},
  doi     = {10.1186/s12859-025-06085-5}
}

@article{Tauzin2021,
  author  = {Tauzin, Guillaume and Lupo, Umberto and others},
  title   = {\texttt{giotto-tda}: A Topological Data Analysis Toolkit for Machine Learning and Data Exploration},
  journal = {Journal of Machine Learning Research},
  volume  = {22},
  year    = {2021}
}

@article{vanVeen2019,
  author  = {{van Veen}, Hendrik Jacob and Saul, Nathaniel and Eargle, David and Mangham, Sam W.},
  title   = {{Kepler Mapper}: A Flexible {Python} Implementation of the {Mapper} Algorithm},
  journal = {Journal of Open Source Software},
  volume  = {4},
  number  = {42},
  pages   = {1315},
  year    = {2019}
}

@article{VejdemoJohansson2025,
  author  = {Vejdemo-Johansson, Mikael},
  title   = {$k$-Means Considered Harmful: On Arbitrary Topological Changes in {Mapper} Complexes},
  journal = {arXiv preprint arXiv:2507.06212},
  year    = {2025}
}

@article{Alvarado2024,
  author  = {Alvarado, Enrique and Belton, Robin and Lee, Erin and Palande, Sourabh and Percival, Sarah and Purvine, Emilie and Tymochko, Sarah},
  title   = {Any Graph is a {Mapper} Graph},
  journal = {arXiv preprint arXiv:2408.11180},
  year    = {2024}
}

@article{Lum2013,
  author  = {Lum, P. Y. and Singh, G. and Lehman, A. and Ishkanov, T. and Vejdemo-Johansson, M. and Alagappan, M. and Carlsson, J. and Carlsson, G.},
  title   = {Extracting Insights from the Shape of Complex Data Using Topology},
  journal = {Scientific Reports},
  volume  = {3},
  pages   = {1236},
  year    = {2013}
}

@article{Nicolau2011,
  author  = {Nicolau, Monica and Levine, Arnold J. and Carlsson, Gunnar},
  title   = {Topology Based Data Analysis Identifies a Subgroup of Breast Cancers with a Unique Mutational Profile and Excellent Survival},
  journal = {Proceedings of the National Academy of Sciences},
  volume  = {108},
  number  = {17},
  pages   = {7265--7270},
  year    = {2011}
}

@article{Rafique2020,
  author  = {Rafique, Ovais and Mir, Amir H.},
  title   = {A Topological Approach for Cancer Subtyping from Gene Expression Data},
  journal = {Journal of Biomedical Informatics},
  volume  = {102},
  pages   = {103357},
  year    = {2020}
}

@article{Rizvi2017,
  author  = {Rizvi, Abbas H. and Camara, Pablo G. and Kandror, Elena K. and Roberts, Thomas J. and Schieren, Ira and Maniatis, Tom and Rabadan, Raul},
  title   = {Single-Cell Topological {RNA-Seq} Analysis Reveals Insights into Cellular Differentiation and Development},
  journal = {Nature Biotechnology},
  volume  = {35},
  number  = {6},
  pages   = {551--560},
  year    = {2017}
}

@article{Saggar2018,
  author  = {Saggar, Manish and Sporns, Olaf and Gonzalez-Castillo, Javier and Bandettini, Peter A. and Carlsson, Gunnar and Glover, Gary and Reiss, Allan L.},
  title   = {Towards a New Approach to Reveal Dynamical Organization of the Brain Using Topological Data Analysis},
  journal = {Nature Communications},
  volume  = {9},
  number  = {1},
  pages   = {1399},
  year    = {2018}
}

@article{Geniesse2022,
  author  = {Geniesse, Caleb and Chowdhury, Samir and Saggar, Manish},
  title   = {{NeuMapper}: A Scalable Computational Framework for Multiscale Exploration of the Brain's Dynamical Organization},
  journal = {Network Neuroscience},
  volume  = {6},
  number  = {2},
  pages   = {467--498},
  year    = {2022}
}

@article{Hasegan2024,
  author  = {Ha{\c{s}}egan, Daniel and Geniesse, Caleb and Chowdhury, Samir and Saggar, Manish},
  title   = {Deconstructing the {Mapper} Algorithm to Extract Richer Topological and Temporal Features from Functional Neuroimaging Data},
  journal = {Network Neuroscience},
  volume  = {8},
  number  = {4},
  pages   = {1355--1379},
  year    = {2024}
}

@article{Li2015,
  author  = {Li, Li and Cheng, Wei-Yi and Glicksberg, Benjamin S. and Gottesman, Omri and Tamler, Ronald and Chen, Rong and Bottinger, Erwin P. and Dudley, Joel T.},
  title   = {Identification of Type~2 Diabetes Subgroups Through Topological Analysis of Patient Similarity},
  journal = {Science Translational Medicine},
  volume  = {7},
  number  = {311},
  pages   = {311ra174},
  year    = {2015}
}

@article{Wamil2023,
  author  = {Wamil, Maryam and Hassaine, Abdelaali and Rao, Shishir and others},
  title   = {Stratification of Diabetes in the Context of Comorbidities, Using Representation Learning and Topological Data Analysis},
  journal = {Scientific Reports},
  volume  = {13},
  pages   = {11478},
  year    = {2023}
}

@article{Lum2012,
  author  = {Lum, P. Y. and Lehmann, A. and Singh, G. and Ishkhanov, T. and Carlsson, G. and Vejdemo-Johansson, M.},
  title   = {The Topology of Politics: Voting Connectivity in the {US} {House of Representatives}},
  journal = {Physica A},
  volume  = {391},
  number  = {19},
  pages   = {4540--4547},
  year    = {2012}
}

@article{Yao2009,
  author  = {Yao, Yuan and Sun, Jian and Huang, Xuhui and Bowman, Gregory R. and Singh, Gurjeet and Lesnick, Michael and Guibas, Leonidas J. and Pande, Vijay S. and Carlsson, Gunnar},
  title   = {Topological Methods for Exploring Low-Density States in Biomolecular Folding Pathways},
  journal = {Journal of Chemical Physics},
  volume  = {130},
  number  = {14},
  pages   = {144115},
  year    = {2009}
}

@article{Tibshirani2001,
  author  = {Tibshirani, Robert and Walther, Guenther and Hastie, Trevor},
  title   = {Estimating the Number of Clusters in a Data Set via the Gap Statistic},
  journal = {Journal of the Royal Statistical Society, Series B},
  volume  = {63},
  number  = {2},
  pages   = {411--423},
  year    = {2001}
}

@article{Liu2008,
  author  = {Liu, Yufeng and Hayes, David Neil and Nobel, Andrew and Marron, J. S.},
  title   = {Statistical Significance of Clustering for High-Dimension, Low-Sample Size Data},
  journal = {Journal of the American Statistical Association},
  volume  = {103},
  number  = {483},
  pages   = {1281--1293},
  year    = {2008}
}

@article{Hartigan1985,
  author  = {Hartigan, J. A. and Hartigan, P. M.},
  title   = {The Dip Test of Unimodality},
  journal = {Annals of Statistics},
  volume  = {13},
  number  = {1},
  pages   = {70--84},
  year    = {1985}
}

@article{Fortunato2007,
  author  = {Fortunato, Santo and Barth{\'e}lemy, Marc},
  title   = {Resolution Limit in Community Detection},
  journal = {Proceedings of the National Academy of Sciences},
  volume  = {104},
  number  = {1},
  pages   = {36--41},
  year    = {2007}
}

@article{Silverman1981,
  author  = {Silverman, B. W.},
  title   = {Using Kernel Density Estimates to Investigate Multimodality},
  journal = {Journal of the Royal Statistical Society. Series B (Methodological)},
  volume  = {43},
  number  = {1},
  pages   = {97--99},
  year    = {1981}
}

@article{Chazal2021,
  author  = {Chazal, Fr{\'e}d{\'e}ric and Michel, Bertrand},
  title   = {An Introduction to Topological Data Analysis: Fundamental and Practical Aspects for Data Scientists},
  journal = {Frontiers in Artificial Intelligence},
  volume  = {4},
  pages   = {667963},
  year    = {2021},
  doi     = {10.3389/frai.2021.667963}
}

@article{Bobrowski2023,
  author  = {Bobrowski, Omer and Skraba, Primo{\v{z}}},
  title   = {A Universal Null-Distribution for Topological Data Analysis},
  journal = {Scientific Reports},
  volume  = {13},
  pages   = {12274},
  year    = {2023},
  doi     = {10.1038/s41598-023-37842-2}
}

@book{Davison1997,
  author    = {Davison, A. C. and Hinkley, D. V.},
  title     = {Bootstrap Methods and Their Application},
  publisher = {Cambridge University Press},
  year      = {1997},
  doi       = {10.1017/CBO9780511802843}
}

\end{document}